\newcommand{\NN}{\mathbb{N}}
\newcommand{\AAA}{\mathbb{A}}
\newcommand{\BBB}{\mathbb{B}}
\newcommand{\CCC}{\mathbb{C}}
\newcommand{\CA}{\mathcal{A}}
\newcommand{\CB}{\mathcal{B}}
\newcommand{\CC}{\mathcal{C}}
\newcommand{\CD}{\mathcal{D}}
\newcommand{\CI}{\mathbb{I}}
\newcommand{\CN}{\mathcal{N}}
\newcommand{\CT}{\mathcal{T}}
\newcommand{\CU}{\mathcal{U}}
\newcommand{\CV}{\mathcal{V}}
\newcommand{\CW}{\mathcal{W}}
\newcommand{\interleave}[2]{{{#1} \textsuperscript{$\wedge$} {#2}}}
    \@ifdefinable{\pOne}{\def\pOne/{\text{I}}}
    \@ifdefinable{\pTwo}{\def\pTwo/{\text{II}}}
\begin{document}


\setcounter{page}{63}
\publyear{22}
\papernumber{2119}
\volume{186}
\issue{1-4}

  \finalVersionForARXIV


\title{Solving Infinite Games in the Baire Space}

\author{Benedikt Br\"utsch\thanks{Address for correspondence:  Chair of Computer Science 7,
                       RWTH Aachen University, Germany.}, Wolfgang Thomas\\
    Chair of Computer Science 7\\
    RWTH Aachen University, Germany\\
    bruetsch@automata.rwth-aachen.de, thomas@cs.rwth-aachen.de
}

\maketitle

\runninghead{B. Br\"utsch and W. Thomas}{Solving Infinite Games in the Baire Space}


\begin{abstract}
    Infinite games (in the form of Gale-Stewart games) are studied where a play is a sequence of natural numbers chosen by two players in alternation, the winning condition being a subset of the Baire space $\omega^\omega$.
    We consider such games defined by a natural kind of parity automata over the alphabet $\NN$, called $\NN$-MSO-automata, where transitions are specified by monadic second-order formulas over the successor structure of the natural numbers.
    We show that the classical B\"uchi-Landweber Theorem (for finite-state games in the Cantor space $2^\omega$) holds again for the present games:
    A game defined by a deterministic parity $\NN$-MSO-automaton is determined, the winner can be computed, and an $\NN$-MSO-transducer realizing a winning strategy for the winner can be constructed.
\end{abstract}

\begin{keywords}
    infinite games, Gale-Stewart games, determinacy, automata over infinite alphabets, transducers, monadic second-order logic, Church's synthesis problem, pushdown systems
\end{keywords}

\section{Introduction}
\label{sec:introduction}

The subject of infinite games (in the sense of turn-based two-player games with complete information) has developed in two fields, descriptive set theory and automata theory.
In descriptive set theory, the focus is the problem of determinacy (i.e., the question whether there is a winning strategy for either of the two players), whereas in automata theory the decision who wins and the construction of winning strategies are the main issues.
It is interesting to note that in descriptive set theory it is not essential whether plays of an infinite game are infinite sequences of bits (or of letters from a finite alphabet), i.e., elements of the Cantor space $2^\omega$, or sequences of natural numbers, i.e., elements of the Baire space $\omega^\omega$.
The results on determinacy, in particular the determinacy of Borel games, hold for both cases, shown with the same proof strategy.
This is different in automata theory where a game (more precisely, its winning condition) is defined by an automaton processing infinite sequences over a finite alphabet and where  the treatment of infinite alphabets (such as $\NN$) causes some difficulties.
Various models of  automata over infinite alphabets have been introduced, mainly in the context of database theory (and then considered over finite sequences), but none of these models can be considered as ``canonical'', and it seems that such automata have so far not been used in the study of infinite games.

The purpose of the present paper is to show that a central result of the automata-theoretic approach to infinite games can be lifted from the Cantor space to the Baire space.
It is the ``B\"uchi-Landweber Theorem''~\cite{BL69}, stating that a game defined by a regular $\omega$-language is determined in the strong sense that the winner can be computed and a finite-state transducer realizing a winning strategy for the winner can be constructed.
Over infinite alphabets we use here a specific (but quite natural) automata-theoretic form of specification of $\omega$-languages, namely by deterministic acceptors that are equipped with a memory consisting of two parts, a finite set of control states and additional memory that can store a natural number.
Transitions are defined by MSO-formulas over the structure $\CN_\mathrm{succ}$ (consisting of the set $\NN$ and the successor function $\mathrm{succ}\colon n \mapsto n+1$).
When these automata are used with the parity condition for the acceptance of infinite words, we call them parity $\NN$-MSO-automata.
For solutions of games we use corresponding transducers, called $\NN$-MSO-transducers.
We show that a game that is defined by a deterministic parity $\NN$-MSO-automaton $\CA$ over the alphabet $\NN$ is determined in the strong sense that the winner can be computed and an $\NN$-MSO-transducer realizing a winning strategy for the winner can be constructed (i.e., computed) from $\CA$.

The first monograph in which the original B\"uchi-Landweber Theorem appeared was the English edition of a truly pioneering book: \emph{Finite Automata -- Behavior and Synthesis} by  Trakhtenbrot and Barzdin~\cite{TB73}.
In the present paper, commemorating the 100th birthday of Boaz Trakhtenbrot, we take up this title and proceed in two stages, first on ``behavior'', introducing the framework of game specification in terms of automata, second on ``synthesis'', where the main result on construction of winning strategies in terms of corresponding transducers is given.

We assume that the reader is acquainted with the fundamentals of automata on infinite words, infinite games, and monadic second-order logic, as presented, e.g., in~\cite{GrThWi, Tho97}.

The main result of this paper was obtained in joint work during the research for the first author's PhD dissertation~\cite{Bru18}.
An early account was given in~\cite{BT16}, proposing a construction of MSO-definable strategies for parity games from MSO-definable strategies for reachability games.
A gap in the justification of this construction motivated the approach pursued in the present paper (and in~\cite{Bru18}) that establishes MSO-definable strategies directly for parity games.

Results on (variants of) $\NN$-MSO-automata were presented in~\cite{speltenwinterthomas, CST15, BLT17}.

We thank the reviewer for his/her most careful reading and numerous valuable suggestions to improve the paper.

\section{Behavior}

\subsection{\texorpdfstring{$\NN$}{N}-MSO-automata}

We consider automata over the alphabet $\NN$.
These automata are equipped with a finite set $Q$ of control states and the set $\NN$ as additional memory;
so a ``configuration'' is a pair $(p,i) \in Q \times \NN$.
Often we use $(q_0,0)$ as initial configuration where $q_0$ is a designated initial control state.
A transition of an $\NN$-MSO-automaton is a triple $((p,i), m, (q,j))$ leading from configuration $(p,i)$ via an input number $m \in \NN$  to configuration $(q,j)$.
We define for each pair $(p,q)$ of control states the set of possible transitions by an MSO-formula  $\varphi_{pq}(x, z, y)$ with first-order variables $x,y,z$.
The transition $((p,i), m, (q,j))$ is allowed if
\begin{equation}\label{transition}
\CN_\mathrm{succ}  \models \varphi_{pq}[i, m , j].  \tag{$\ast$}
\end{equation}

\begin{remark}\label{automatic}
    The ternary relation defined by a formula $\varphi_{pq}$ can as well be defined by a finite automaton processing a triple $(i,m,j)$ in unary notation, i.e., as the word triple $(1^i, 1^m, 1^j)$ scanned in parallel letter by letter (with a dummy symbol used at the end, up to the length of the longest component).
    So a transition formula defines an automatic relation over the singleton alphabet $\{1\}$.
\end{remark}

For the acceptance of finite words, we use a subset $F \subseteq Q$ as acceptance component,
whereas for the acceptance of infinite words, we use a coloring $c\colon Q \rightarrow \{0, \ldots, d\}$ of the control states by natural numbers.
So an $\NN$-MSO-automaton is of the form
$\CA = (Q, q_0, (\varphi_{pq}(x, z, y))_{p,q \in Q}, \mathrm{Acc})$
where $\mathrm{Acc}$ is either a set $F \subseteq Q$ or a coloring  $c\colon Q \rightarrow \{0, \ldots, d\}$ of states by natural numbers.
A ``machine-oriented'' format of these automata, avoiding the reference to MSO-logic, is given at the end of this subsection.

A \emph{run} of an $\NN$-MSO-automaton $\CA$ over a finite word $w = m_0 \ldots m_\ell$ is a finite sequence of configurations $(p_0,0), (p_1, i_1), \ldots (p_{\ell+1}, i_{\ell +1})$
such that $p_0 = q_0$ and
$\CN_\mathrm{succ}  \models \varphi_{p_{k} p_{k+1}}[i_k, m_k, i_{k+1}]$
for all $k \in \{0,\dotsc,\ell\}$.
We say that $\CA$ \emph{accepts} $w$ if such a run exists with $p_{\ell+1} \in F$.
Analogously, a run of an $\NN$-MSO-automaton $\CA$ over an infinite word $\alpha = m_0 m_1 m_2 \ldots$ is an infinite sequence of configurations $(p_0,0), (p_1, i_1), (p_2, i_2), \ldots$
such that $p_0 = q_0$ and
$\CN_\mathrm{succ}  \models \varphi_{p_{k} p_{k+1}}[i_k, m_k, i_{k+1}]$
for all $k \in \NN$.
We say that $\CA$ accepts $\alpha$ (by the ``parity condition'') if there is a corresponding run such that among the colors $c(p_0), c(p_1), \ldots$ the maximal color occurring infinitely often is even;
used with this acceptance condition for infinite words, $\CA$ is called a \emph{parity $\NN$-MSO-automaton}.
We write $L_\CA$ to denote the language of words accepted by $\CA$ (where the type of acceptance component of $\CA$ determines whether this language consists of finite or infinite words) and we call such languages $\NN$-MSO-\emph{recognizable}.

\medskip
In \emph{deterministic $\NN$-MSO-automata} we use formulas $\varphi_{pq}(x,z,y)$ that are functional in $p$, $x$, and $z$, i.e., such that for each $p$, $i$, $m$, there is exactly one pair $(q,j)$ with (\ref{transition}).
This is a decidable condition, since one can formalize it as the MSO-sentence
\[
    \bigwedge_p (\forall x \forall  z \bigvee_q [\exists^{=1}y\, \varphi_{pq}(x, z, y)
    \wedge \bigwedge_{r \not= q} \forall t \neg \ \varphi_{pr}(x,z, t)])
\]
and apply decidability of the MSO-theory of~$\CN_\mathrm{succ}$.

Let us present some simple examples.
First we consider a language of  finite words, namely those of the form $n (n+1) (n+2) \ldots (n+\ell)$ with $\ell > 0$, which can be recognized by a two-state $\NN$-MSO-automaton.
By the first transition from control state $q_0$ to $q$, the memory is set to the first input value~$n$, using the formula $\varphi_{q_0 q}(x,z,y) \coloneqq (x=0) \land (y=z)$.
Subsequently, now in state $q$, it is checked that the current input $m$ is the increment by 1 of the current memory content, and the memory content is increased by 1 to become $m$.
This is done by the formula $\varphi_{q q}(x,z,y) \coloneqq (z=x+1) \land (y=z)$.
We set $F =\{q\}$.
After completion by transitions for the rejection of input words, we obtain a deterministic automaton.

The second example is an $\omega$-language, namely the set of unbounded sequences over $\NN$.
We describe how the corresponding parity $\NN$-MSO-automaton should work:
It keeps in its memory, starting with value $0$, the maximum of the numbers seen so far, and adjusts this value whenever this maximum has to be increased.
In these steps the control state receives color 2, otherwise the color is 1.
Again we obtain a deterministic automaton.

We use nondeterminism in our third example, the $\omega$-language of those sequences in which some number occurs infinitely often.
Here the automaton guesses an occurrence of such a number, uses its memory to propagate this value, and signals by visiting a state of color 2 whenever this value is encountered again; other states have color 1.

In \emph{$\NN$-MSO-transducers}, the acceptance component of deterministic $\NN$-MSO-automata as introduced above is replaced by an output component that associates with each configuration a number $n$ as output.
This output component is specified by a family of MSO-formulas $\psi_{q}(x, z)$, for $q \in Q$, which are functional in $q$ and $x$, i.e., satisfying the condition
$\CN_\mathrm{succ} \models \bigwedge_q \forall x \exists^{=1}z\, \psi_q (x, z)$.
So the format of a transducer is
$\CC = (Q, q_0, (\varphi_{pq}(x, z, y))_{p,q \in Q}, (\psi_{q}(x,z))_{q \in Q})$.
For a configuration $(q,j)$ reached via the current input transition formula $\varphi_{pq}(x,z,y)$, the output is the unique number $n$ with $\CN_\mathrm{succ} \models \psi_q[j, n]$.

An $\NN$-MSO-transducer $\CA$ defines in this way a function $F_\CA\colon \NN^* \rightarrow \NN^*$ where $F_\CA(w)$ is obtained as the sequence of outputs of the configurations of the unique run of $\CA$ on $w$, with the convention that the initial configuration yields no output.
When considering infinite words, we denote by $F_\CA$ the function $F_\CA\colon \NN^\omega \rightarrow \NN^\omega$ where $F_\CA(\alpha)$ is obtained analogously from the unique run of $\CA$ on $\alpha$.
(We assume here that input alphabet and output alphabet coincide; it is obvious how to cover the case of different alphabets.)

\begin{remark}
    Our model of transducer corresponds to Moore automata in classical automata theory, in which transitions involve changes of states (here: configurations) induced by inputs and where outputs are determined by states (here: configurations).
    Other models, for example closer to Mealy automata, are also conceivable and could be used in our main result.
    We do not enter here a discussion on such variants, their comparison in expressive power, and their use in the solution of infinite games.
\end{remark}

\begin{remark}
    The format of transition formulas and output formulas in $\NN$-MSO-automata and $\NN$-MSO-transducers can be modified in a way that the indexing of formulas by states $p,q \in Q$ is avoided:
    One may use the structure $Q \times \CN_\mathrm{succ}$ for configurations rather than $\CN_\mathrm{succ}$ for memory contents and then apply an indexing  on the level of elements of the domain $Q\times \NN$ to access the control states and the memory contents.
    We decided for an indexing on the level of formulas  and to use the standard structure $\CN_\mathrm{succ}$ for the interpretation of formulas.
\end{remark}

We end this subsection by giving a ``machine characterization'' of $\NN$-MSO-automata in which the transition formulas and the output formulas are replaced by suitable automata, expanding on the Remark \ref{automatic} above that the relations defined by transition formulas are automatic.
For this purpose we give the input letters, i.e.,
the natural numbers, a more concrete representation.
It will be convenient (also for a later use in the context of  pushdown systems)
to represent the number $m$ by the $\omega$-word $\bot 1^m \# \# \ldots$ which we imagine written as a column with $\bot$ at the bottom.
The input word  $ 4 \ 2 \ 0 \ 4 \ 3$, for instance, is represented by the sequence of columns in Figure~\ref{gridfigure}.

\begin{figure}[ht]
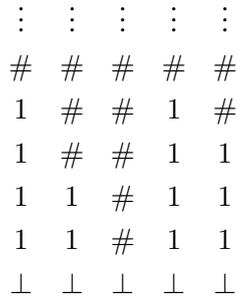

\vspace*{-1mm}
    \[
        \begin{matrix}
            \vdots & \vdots & \vdots & \vdots & \vdots & \\
            \#   & \#   & \#   & \#   & \#   & \\
            1   & \#   & \#   & 1   & \#   & \\
            1  & \#   & \#  & 1   & 1   & \\
            1  & 1   & \#   & 1   & 1   & \\
            1   & 1   & \#  & 1   & 1   & \\
            \bot     & \bot     & \bot     & \bot     & \bot     & \
        \end{matrix} \vspace*{-4mm}
    \]
    \caption{\label{gridfigure}Representation of the word $4 \ 2 \ 0 \ 4 \ 3$ as a sequence of columns.}
\end{figure}

A run of an $\NN$-MSO-automaton $\CA$ corresponds to an enrichment of this grid structure where to the left and to the right of each column information on a configuration is added; then one requires that the configuration at the right of a column coincides with the configuration to the left of the subsequent column.
A  transition from configuration $(p,i)$ via letter $m$ to configuration $(q,j)$ is described using two auxiliary pebbles,
an ``in-pebble'' colored $p$ at the position of the $i$-th letter $1$  left of the column $\bot 1^m \# \# \ldots$ and an ``out-pebble'' colored $q$ at the position of the $j$-th letter $1$ right of the column.
We can  ``implement'' transitions according to a set of formulas $(\varphi_{pq}(x,z,y))_{p, q \in Q}$ by an automaton $\CC$ that computes a target configuration $(q,j)$ from a given configuration $(p,i)$ and an input number~$m$:
This ``transition automaton'' $\CC$ works in a two-way manner, starting at the bottom of a column $\bot 1^m \# \# \ldots$ that is equipped with an in-pebble, say colored $p$ at position $i$.
The automaton can place the out-pebble at any position it reaches (thereby removing it from the previous position, with the initial position being $0$), and at each position can inspect the respective column letter (from $\{\bot, 1, \#\}$) and the in- and the out-pebble with their colors as far as they are  there.
The computation ends in a designated halting state, and the color $q$ and the position $j$ of the out-pebble at its last placement give the new configuration.
In~\cite{BLT17} it is shown that the transition formulas of a nondeterministic or deterministic $\NN$-MSO-automaton can be substituted with a nondeterministic or deterministic transition automaton~$\CC$, respectively, and that conversely the definition of transitions in terms of such an automaton $\CC$ can be transformed into a corresponding family $(\varphi_{pq}(x,z,y))_{p, q \in Q}$ of MSO-formulas.

In order to extend the model to cover also $\NN$-MSO-transducers, we add to a transition automaton $\CC$ an ``output automaton'' $\CD$ that places a further ``output-pebble'' on some position after the deterministic transition automaton $\CC$ has halted on a given column.
Again $\CD$ proceeds in a two-way mode, can inspect the current position of the out-pebble and can place the output-pebble (thereby removing it from a previous position, with the initial position being $0$).
The automaton $\CD$ can halt by entering a designated halting state, and the last position of the output-pebble then defines the current output.

\subsection{Some properties}

We give some comments on the properties of $\NN$-MSO-automata, using an informal exposition and mainly focusing on languages of finite words (to allow for direct comparisons with existing models).
For a more detailed treatment we refer to~\cite{BLT17}.

As shown in~\cite{BLT17}, the $\NN$-MSO-automata are different in expressive power from other models considered in the literature; let us mention the finite memory automata of Kaminski and Francez~\cite{KF},
the automata on data words of Boja\'{n}czyk, David, Muscholl, Schwentick, and S\'{e}goufin~\cite{BDMS}, the ordered data automata of Tan~\cite{Tan12}, the G-automata of  Boja\'{n}czyk, Klin and Lasota~\cite{BKS14}, and the single-use automata of Boja\'{n}czyk and Stefa\'{n}ski~\cite{BS20}.
A language that is recognizable by $\NN$-MSO-automata but not by any of the mentioned ones is the set  $L_1$ of words of the form $n (n+1) (n+2) \ldots (n+\ell)$ considered in the previous subsection.

Not surprisingly, deterministic $\NN$-MSO-automata are strictly weaker in expressive power than nondeterministic ones.
A simple example is the language $L_2$ (over $\NN$) of words in which some number occurs twice;
by nondeterminism the position of its first occurrence is guessed, and its value is propagated until the second occurrence.
(In the previous subsection we considered an analogous $\omega$-language $L_2'$ containing the sequences in which some number is repeated infinitely often.)
It is intuitively clear that deterministic $\NN$-MSO-automata fail to recognize $L_2$ (as well as $L_2'$) since a possibility to store several candidates of natural numbers during a computation is lacking (for a detailed proof see~\cite{CST15}).
This weakness does not arise for the language that contains the words over $\NN$ in which the \emph{first} number reappears some time later (or, in an $\omega$-word, infinitely often).
Repetitions of values play a role in the study of ``data words''; some of the  above-mentioned automaton models are designed to handle multiple occurrences of values (and thus to recognize $L_2$).

Let us mention that the decision problem whether a language recognized by a nondeterministic $\NN$-MSO-automaton is in fact recognizable by a deterministic one is undecidable;
similarly this problem is undecidable when parity $\NN$-MSO-automata over infinite words are considered.
This follows from the corresponding result on 1-counter automata over finite alphabets (cf.\ Theorem~12.4.3 in~\cite{Bru18}).

The closure properties of $\NN$-MSO-automata in their deterministic and nondeterministic form resemble those of pushdown automata (and a connection to pushdown automata will be used later in this paper).
For example, closure under intersection fails;
as an example one can use the intersection of the languages $L_3$ and $L_3'$ (over $\NN$) where $L_3$ consists of the words in which the first and the third number coincide and $L_3'$ consists of the words in which the second and the fourth number coincide (cf.~\cite{BLT17}).

We mention an important positive result on $\NN$-MSO-automata (again from~\cite{BLT17}):

\begin{proposition}\label{nonempty}
    The non-emptiness problem for $\NN$-MSO-automata over finite words is decidable.
\end{proposition}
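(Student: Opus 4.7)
My plan is to reduce non-emptiness to control-state reachability in a pushdown system with a unary stack. Since the input letters are unconstrained, $L_\CA \neq \emptyset$ iff there is a finite chain of configurations $(q_0,0) = (p_0,i_0),\ldots,(p_\ell,i_\ell)$ with $p_\ell \in F$ and $\CN_\mathrm{succ} \models \psi_{p_k p_{k+1}}[i_k, i_{k+1}]$ for each $k$, where $\psi_{pq}(x,y) \coloneqq \exists z\, \varphi_{pq}(x,z,y)$. Thus the question is equivalent to reachability in the infinite-state graph on $Q \times \NN$ induced by the projected transition relations $\psi_{pq}$.

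By Remark~\ref{automatic} and the decidability of the MSO theory of $\CN_\mathrm{succ}$, each $\psi_{pq}$ is automatic over the unary alphabet $\{1\}$: one can effectively construct a synchronous two-tape finite automaton $\CB_{pq}$ accepting exactly the padded pairs $(1^i, 1^j)$ with $\psi_{pq}(i,j)$. Equivalently, $\psi_{pq}$ is a semilinear subset of $\NN^2$.

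I would then simulate this graph by a nondeterministic pushdown system with stack alphabet $\{1, \bot\}$, encoding a configuration $(p, i)$ as control state $p$ with stack $\bot 1^i$. A $\psi_{pq}$-step is realised by running $\CB_{pq}$ in the pushdown's finite control while translating its pair-moves into stack operations: $(1,1)$-moves rewrite a top $1$ to $1$, $(1,\#)$-moves pop (consuming the old memory), and $(\#,1)$-moves push (building the new memory $1^j$); a move is valid precisely when $\CB_{pq}$ accepts with the switch to first-tape padding coinciding with the appearance of $\bot$ on top of the stack. Non-emptiness of $\CA$ then corresponds to the reachability of a control state in $F$ from the initial configuration $(q_0, \bot)$, which is the classical decidable problem of pushdown control-state reachability, solvable by the saturation algorithm.

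The main obstacle I anticipate is the faithful pushdown simulation of a synchronous two-tape automaton on a single unary stack: since the pushdown inspects only the top of its stack, one must arrange the order of pop/push/rewrite steps so as to be consistent with the $\CB_{pq}$-computation on $(1^i, 1^j)$ and with the end-of-stack event coinciding with the switch from first-tape input to first-tape padding, in both cases $i > j$ and $i < j$. Once this simulation is set up, the remaining ingredients—existential projection of $z$, the MSO-to-automaton construction, and the invocation of pushdown reachability—are all standard applications of known theorems.
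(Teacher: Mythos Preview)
Your route is different from the paper's. The paper never leaves MSO over $\CN_\mathrm{succ}$: it writes a formula $\psi_{q,q'}(x,x')$ expressing reachability of configuration $(q',x')$ from $(q,x)$ via the standard second-order transitive-closure encoding (set variables $X_{q_k}$, one per control state, required to be closed under the one-step relation $\exists z\,\varphi_{pq}$), sets $\varphi_\CA \coloneqq \bigvee_{q\in F}\exists y\,\psi_{q_0,q}(0,y)$, and invokes decidability of the MSO-theory of $\CN_\mathrm{succ}$. That is the entire argument; no pushdown systems enter.

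Your reduction to pushdown control-state reachability can be made to work and is, in spirit, the construction the paper carries out later (Sections~\ref{extending}--\ref{transferring}) when passing from the prefix-recognizable graph $G(\CA)$ to a pushdown graph. But the concrete tape-to-stack translation you sketch is not sound. A $(1,1)$-move that ``rewrites a top $1$ to $1$'' leaves the stack height unchanged, so after the $\min(i,j)$ many $(1,1)$-moves the stack is still $\bot 1^i$; in the case $i<j$ the ``switch to first-tape padding'' therefore never sees~$\bot$, and in the case $i>j$ nothing ties the number of rewrite moves to the actual stack height, so the pushdown may simulate $\CB_{pq}$ on a pair $(1^{i'},1^{j'})$ with $i'\neq i$. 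The underlying obstacle---which you correctly anticipate---is that a synchronous two-tape automaton consumes both coordinates in lock-step, while a single unary stack can hold only one of them.

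The clean repair is to decouple the coordinates via the prefix-recognizable decomposition (Proposition~\ref{UVWproposition}, here over the alphabet $\{1\}$): each $\psi_{pq}$ becomes a finite union of relations ``$i=k+a$, $j=k+b$ with $1^a\in U$, $1^b\in V$, $1^k\in W$'' for regular $U,V,W\subseteq 1^*$. The pushdown then pops $a$ symbols while running the DFA for $U$, reads off whether $1^k\in W$ from a DFA-state annotation carried in the stack alphabet (so no destructive inspection is needed), and pushes $b$ symbols while running the DFA for $V$. With this correction your plan goes through, though it is visibly heavier than the paper's two-line MSO argument; what it buys in exchange is a concrete machine model for the reachability check rather than an appeal to the MSO decision procedure.
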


For the proof it suffices, given an $\NN$-MSO-automaton $\CA$, to construct an  MSO-sentence $\varphi_\CA$ which says in $\CN_\mathrm{succ}$ that $\CA$ accepts at least one word;
then one can apply decidability of the MSO-theory of~$\CN_\mathrm{succ}$.
To that end, we present a formula $\psi_{q,q'}(x,x')$ which is true in the structure $\CN_\mathrm{succ}$ iff there exists a word on which $\CA$, starting in configuration $(q, x)$, can reach configuration $(q',x')$;
then we set $\varphi_\CA \coloneqq  \bigvee_{q \in F} \exists y \psi_{q_0,q}(0,y)$, where $q_0$ is the initial state and $F$ is the set of accepting states of $\CA$.

If the set of states of $\CA$ is $\{q_0,\ldots, q_n\}$ we introduce corresponding set variables $X_{q_k}$.
Let $\theta(X_{q_0}, \ldots, X_{q_n})$ be the formula expressing that the set tuple $(X_{q_0}, \ldots, X_{q_n})$ is  closed under the
\mbox{transition} relation defined by the family of transition formulas $\varphi_{p,q}$ of $\CA$:
\[
\theta(X_{q_0},\ldots,X_{q_n})\coloneqq\forall x \forall y (\bigwedge_{p,q \in Q} (X_{p}(x) \wedge \exists z\, \varphi_{p,q}(x,z, y)) \rightarrow X_{q}(y)).
\]
Now we  set
\[
\psi_{q,q'}(x,x')\coloneqq\forall X_{q_1}\ldots\forall X_{q_n} ((X_q(x) \wedge \theta(X_{q_1},\ldots, X_{q_n}))\rightarrow X_{q'}(x')).
\]
Note that the membership problem (whether an $\NN$-MSO-automaton $\CA$ accepts a given word $w \in \NN^*$) is solvable by enhancing $\CA$ with a test whether the sequence of scanned letters matches $w$ and then checking for non-emptiness.

For deterministic $\NN$-MSO-automata (trivially closed under complement), the decidability of the non-emptiness problem extends to the universality problem.
However, for nondeterministic $\NN$-MSO-automata the universality problem is undecidable (cf.~\cite{CST15}).
The decidability of the non-emptiness problem also holds for parity automata over infinite words; it is easy to extend the idea of the proof above, using induction on the number of colors.

\subsection{Some variants of \texorpdfstring{$\NN$}{N}-MSO-automata}\label{variants}

There are many ways to modify the type of input and the memory structure of $\NN$-MSO-automata, as well as the logic used for transition formulas.
We present some of such variants, focusing on the question whether the non-emptiness problem is decidable and thus whether there is some algorithmic potential.
\begin{enumerate}
    \item \emph{Presburger arithmetic.}
        Here one uses Presburger arithmetic for the definition of transitions, again with natural numbers as memory contents but now with first-order transition formulas over $(\NN, +, 0)$.
        For this model the non-emptiness problem is undecidable, since the computations of 2-counter automata (or 2-register machines) can be simulated by automaton runs, even discarding input letters.
        A pair $(m, n)$ of numbers is coded by the number $k = 2^m 3^n$; then, for example, increment of $m$ to $m+1$ (and of $n$ to $n+1$, respectively)
        is simulated by passing from $k$ to $k+k$ (from $k$ to $k+k+k$, respectively).

    \item \emph{Words as input letters.}
        Next we consider the case that the input letters are no more natural numbers but words over a finite alphabet $\Sigma$.
        (The case where $\Sigma = \{1\}$ amounts to the original model, where the alphabet is $\NN$.)
        A finite word over the alphabet $\Sigma^*$ is a finite sequence $u_0 u_1 \ldots u_\ell$ of ``$\Sigma^*$-letters'' $u_k \in \Sigma^*$;
        it should not be confused with the concatenation of the $u_k$.
        Similarly, an $\omega$-word over $\Sigma^*$ is a sequence $u_0 u_1 u_2 \ldots$ with $u_k \in \Sigma^*$.
        For the logical description of transitions in the structure $\CN_\mathrm{succ}$,
        a $\Sigma^*$-letter $u = b_1 \ldots b_\ell$ is represented by a $|\Sigma|$-tuple of subsets of $\NN$, namely  $(P_a)_{a \in \Sigma}$ where we have $m \in P_a$ iff $b_m = a$.
        In the transition formulas one uses set variables $Z_a$ to refer to these sets.
        A transition formula is then of the form $\varphi_{pq}(x, \overline{Z}, y)$, where $\overline{Z} = (Z_a)_{a \in \Sigma}$ is a $|\Sigma|$-tuple of set variables; similarly one obtains output formulas $\psi_q(x, \overline{Z})$.
        For the resulting automaton model, all properties that were mentioned for $\NN$-MSO-automata hold again with the obvious modifications.
        It should be mentioned that also infinite words can be taken as input symbols while still keeping decidability of the non-emptiness problem.

        The theory of $\NN$-MSO-automata over alphabets $\Sigma^*$ and the solution of infinite games over alphabets $\Sigma^*$ is treated in detail in the first author's dissertation~\cite{Bru18}.
        In the present paper we consider the special case of the alphabet $\NN$ (and thus games with winning conditions in the Baire space) since this case spares some technicalities and in the literature on infinite games the Baire space usually serves as the master example when infinite alphabets are treated.

    \item\label{word-memory} \emph{Words as memory contents.}
        Next we consider the case where words $u \in \Sigma^*$ (with $|\Sigma| > 1$) not only occur as input letters but also as memory contents.
        Then the transition formulas could take the form $\varphi_{pq}(\overline{X}, \overline{Z}, \overline{Y})$ with variables $\overline{X}, \overline{Z}, \overline{Y}$ for set tuples.
        By the equivalence between MSO-logic and finite automata,
        the transition relations $T_{pq} \subseteq (\Sigma^*)^3$ are automatic (as it was mentioned in Remark \ref{automatic} for the case of a singleton alphabet).
        In this framework, the update of Turing machine configurations is directly representable  (even using input-free automata) and one obtains  the undecidability of the non-emptiness problem.
        This effect is avoided when one takes a weaker format of the transition formulas, using the infinite $|\Sigma|$-branching tree $\CT_{\Sigma^*}$ whose nodes are the words over $\Sigma$,
        and which is equipped with the successor functions $u \mapsto ua$ for $ a \in \Sigma$.
        Using transition formulas  $\varphi_{pq}(x,z,y)$ interpreted in this tree, one can show that the non-emptiness problem is decidable by copying the proof of Proposition~\ref{nonempty} and invoking decidability of the MSO-theory of $\CT_{\Sigma^*}$.
        However, with the corresponding definition of transducers a transfer of the main result presented below is not possible due to the fact  that over trees the definable selection of an output tree matching a given MSO-property fails (\cite{carayol_choice_2010}); see Remark~\ref{numberchoice}.

    \item\label{tree-letters} \emph{Finite or infinite trees over ranked alphabets as input letters.}
        This case can be handled in analogy to the case where the input letters are finite words or infinite words, respectively;
        a transition formula has the format $\varphi_{pq}(x,\overline{Z},y)$ where $\overline{Z}$ stands for the labelling of a tree and $x,y$ for two positions (i.e., nodes) in or beyond the domain of this tree.
        Decidability of the non-emptiness problem now follows as in Proposition \ref{nonempty}, again by applying Rabin's Tree Theorem~\cite{Rab69} that the MSO-theory of the $k$-ary infinite tree is decidable (where $k$ is the size of the ranked alphabet under consideration).
        Again, as mentioned above,
        a principal difficulty arises when corresponding transducers for  trees are envisaged for a solution of the synthesis problem since an MSO-definable selection of an output tree is not possible (\cite{carayol_choice_2010}).
\end{enumerate}

\subsection{Infinite games and the synthesis problem}

Using parity $\NN$-MSO-automata we now define corresponding infinite games over the Baire space.
Such a game (also called a Gale-Stewart game, cf.~\cite{GaleStewart53}) is played between two players called I and II, who choose natural numbers in alternation:
Player~I starts by choosing a number $m_0$, then Player~II responds by choosing a number $n_0$,
then Player~I picks a number $m_1$, Player~II a number $n_1$, and so on ad infinitum.
The resulting sequence $m_0 n_0 m_1 n_1 m_2 \ldots$ is a play of the game.
The winning condition is defined by an $\omega$-language $L \subseteq \NN^\omega$:
A play $\alpha$ is declared to be won by Player~II iff $\alpha$ belongs to $L$, otherwise Player~I wins the play $\alpha$.
By $\Gamma(L)$ we denote the game with winning condition $L$.
A strategy for Player~II is given by a function $\sigma\colon \NN^* \rightarrow \NN$; it fixes the choice of Player~II upon the choices $m_0, \ldots, m_k$ of Player~I to be $n_k = \sigma(m_0 \ldots m_k)$.
For a sequence $\alpha_{\pOne/} = m_0 m_1 \ldots $ of numbers chosen by Player~I, the function $\sigma$  induces a sequence $\alpha_{\pTwo/} = n_0 n_1 \ldots$ and thus the play $\interleave{\alpha_{\pOne/}}{\alpha_{\pTwo/}} = m_0 n_0 m_1 n_1 \ldots$.
The strategy $\sigma$ is called a winning strategy for Player~II in $\Gamma(L)$ if for every sequence $\alpha_{\pOne/}$ the induced play $\interleave{\alpha_{\pOne/}}{\alpha_{\pTwo/}}$ belongs to $L$.
If a winning strategy exists for Player~II in $\Gamma(L)$, we say that Player~II wins this game.
Correspondingly one defines strategies and winning strategies for Player~I.
The game $\Gamma(L)$ is called \emph{determined} if one of the two players has a winning strategy.
It is well known that the statement ``For each $L \subseteq \NN^\omega$ the game $\Gamma(L)$ is determined'' is a strong set-theoretical hypothesis, incompatible with the axiom of choice; however, for Borel sets $L\subseteq \NN^\omega$, Martin's Theorem~\cite{Mar75} guarantees determinacy of $\Gamma(L)$.

In the sequel we study games defined in terms of $\omega$-languages recognized by deterministic  parity $\NN$-MSO-automata.
Due to the parity acceptance condition, such games belong to the Boolean closure of $\Pi_2$-sets in the Borel hierarchy and thus are determined.
We are interested here in two algorithmic problems, namely to decide, given a deterministic parity $\NN$-MSO-automaton $\CA$, who of the two players wins the game $\Gamma(L_\CA)$,
and to provide a concrete winning strategy for the winner.
This is the content of Church's Problem~\cite{Chu63} which was originally stated for sequences over finite alphabets, asking whether for a given condition on sequences $\interleave{\alpha_{\pOne/}}{\alpha_{\pTwo/}}$ a ``solution'' in terms of a letter-by-letter computable transformation $\sigma\colon \alpha_{\pOne/} \mapsto \alpha_{\pTwo/}$ exists and in this case to construct such a solution.
In~\cite{McN65}, McNaughton introduced the game-theoretic formulation of this problem, giving it a symmetric shape in the two players, and thus asking for an explicit winning strategy for Player~II if such a strategy for Player~I does not exist.
It should be noted that Church mentioned conditions presented in a logical formalism.
He spoke of ``systems of recursive arithmetic''; later the dominant logic in this context was MSO-logic over $\CN_\mathrm{succ}$,
since the B\"uchi-Landweber-Theorem~\cite{BL69} provided a solution of Church's Problem for this logic.
In the present paper we consider the alphabet $\NN$ and start with a condition defined in terms of deterministic parity $\NN$-MSO-automata.

\section{Synthesis}

\subsection{Main result and proof strategy}

\begin{theorem}\label{maintheorem}
    For a game $\Gamma(L_\CA)$ where $\CA$ is a deterministic parity $\NN$-MSO-automaton, the winner can be computed, and an $\NN$-MSO-transducer realizing a winning strategy for the winner can be constructed.
\end{theorem}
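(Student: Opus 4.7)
The plan is to reduce the Gale--Stewart game $\Gamma(L_\CA)$ to a parity game on a pushdown graph, to apply Walukiewicz's theorem on pushdown parity games, and then to transfer the winning region and a winning strategy back to the $\NN$-MSO setting. Two features of $\CA$ make this natural: its memory is a single natural number, representable as a unary word on a stack; and by the machine characterisation at the end of Section~2.1 the transitions of $\CA$ are computed by two-way transition automata on a column representation of the input letter, hence can be simulated by finite controllers interacting with such a stack.

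First, I would encode each natural-number move $m\in\NN$ as the finite word $1^m\#$ over the finite alphabet $\Sigma=\{1,\#\}$, converting $\Gamma(L_\CA)$ into a Gale--Stewart game $\widetilde{\Gamma}$ over $\Sigma^\omega$ in which Player~I makes all moves during the encoding of each $m_i$ and Player~II all moves during the encoding of each $n_i$. Bookkeeping states in the game graph would keep track of the current macro-phase and its owner, and it is immediate that the same player wins $\Gamma(L_\CA)$ and $\widetilde{\Gamma}$.

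Second, I would realise $\widetilde{\Gamma}$ as a parity game on a pushdown graph whose stack stores the current memory value $i$ of $\CA$ in unary (as $1^i$ above a bottom marker), with the finite control carrying $\CA$'s state plus simulation bookkeeping. To simulate a transition $(p,i)\to(q,j)$ on input $m$, the pushdown system would collect $1^m$ above an auxiliary marker as the unary encoding is played, and when $\#$ is read it would simulate the two-way transition automaton of $\CA$ on the column, reducing the stack to $1^j$ via a bounded sequence of pushdown operations. The parity colouring would be inherited from $\CA$ on positions corresponding to reaching a new $\CA$-configuration, with bookkeeping positions coloured so as not to dominate the $\limsup$.

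Walukiewicz's theorem would then give computability of the winner and realise a winning strategy by a pushdown transducer $\CC$. The last step is to convert $\CC$ back into an $\NN$-MSO-transducer, exploiting the unary nature of the stack: the winning region at each control state is a regular set of stack words, which over a unary alphabet is an MSO-definable subset of $\NN$, and MSO-selection on $\CN_\mathrm{succ}$ should let one extract an MSO-definable winning response function. The hardest part of the plan is precisely this simulation-and-transfer step: one must verify that the pushdown reduction preserves the parity winner (in particular that the bookkeeping does not corrupt the $\limsup$ condition) and that the resulting strategy admits an $\NN$-MSO-transducer formulation, since a general pushdown transducer need not be MSO-describable. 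The specific shape of the pushdown game inherited from $\CA$, together with MSO-definability of regular subsets of $\NN$ in $\CN_\mathrm{succ}$, should provide the necessary leverage.
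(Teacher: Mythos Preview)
Your plan follows a genuinely different route from the paper. The paper first abstracts away the input letters to obtain a \emph{prefix-recognizable} game graph $G(\CA)$ (vertices $1^i p$, edges given by $\exists z\,\varphi_{pq}$), then reduces this to a pushdown game $H(\CA)$ via the Caucal/Cachat $U{-}V{-}W$ decomposition of MSO-definable edge relations: the moving player performs micro-steps that pop a $U$-suffix, lets the opponent optionally verify the $W$-prefix, and pushes a $V$-suffix. Memoryless strategies on $H(\CA)$ are then encoded as $\CI$-labelled trees over $t_{\AAA^*}$ and shown to form a regular tree language via a universal two-way parity tree automaton (the Kupferman--Piterman--Vardi technique); the regular witness tree is directly MSO-definable, and this formula is translated back to $\CN_{\mathrm{succ}}$. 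You instead keep the input letters visible, encode them in unary, build the pushdown game in one shot from the machine characterisation, and invoke Walukiewicz rather than the tree-automaton route. Your path is more concrete and avoids the prefix-recognizable layer; the paper's path is more modular and yields Theorem~\ref{mainprefrec} on arbitrary normal prefix-recognizable parity games as a by-product.

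Two points deserve attention. First, a small one: in your encoded game a player may refuse ever to emit $\#$, so the bookkeeping colours must not merely ``not dominate the $\limsup$'' but must actively punish the stalling player (colour Player~I's encoding phase even, Player~II's odd, with all $G$-colours shifted above); the paper faces the analogous issue with its micro-step phases and handles it exactly this way. Second, and this is the real gap you yourself flag: Walukiewicz's theorem gives you a strategy realised by a pushdown transducer whose stack alphabet is an \emph{enriched} version of the game's (each cell annotated with a tuple of colour obligations). It is not immediate that this collapses to an MSO-definable function on $\CN_{\mathrm{succ}}$ just because the original stack is unary---the enriched stack is not unary, and the strategy's choice depends on the enriched stack content accumulated along the play, not on the current memory value alone. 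To close this you must either argue separately that a \emph{positional} winning strategy on the game graph has MSO-definable choice at each $1^i p$---which essentially reproduces the paper's regular-tree argument---or invoke an external result that pushdown parity games admit strategies MSO-definable in the configuration tree. Either way this last step needs a concrete mechanism; ``MSO-selection on $\CN_{\mathrm{succ}}$'' alone does not supply it.
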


In the proof we use (modifications of) results on definability of winning strategies in pushdown games.
Thus we pass from arithmetic, as it appears in $\NN$-MSO-automata and where numbers are the elementary objects,
to formal languages, where we deal with words rather than numbers and handle single digits (in our case in pushdown systems).
Definable strategies in this context are then translated back to arithmetic as it is required in our format of transducers.

At several places we rely on (or modify) constructions from the literature.
In particular, we build on ideas of Kupferman, Piterman, and Vardi on alternating automata~\cite{KPV10, piterman_micro-macro_2003}
and on work by Cachat~\cite{cachat_uniform_2003} on parity games over prefix-recognizable graphs,
involving a ``reduction'' of prefix-recognizable graphs to pushdown graphs that goes back to Caucal (see Proposition 4.2 of~\cite{Caucal03}).

As a first step, we extract from the transition graph of $\CA$ a game graph $G(\CA)$;
this graph just captures the progression of plays in terms of moves from one configuration to another.
So in $G(\CA)$ the transition labels are no more present.
The alternation of moves between the two players is captured by doubling the set of control states (so that one obtains ``I-states'', written  $p_\pOne/$, and ``II-states'', written $p_\pTwo/$).
One obtains a \emph{prefix-recognizable graph}; its set of vertices is a regular language of words over a finite alphabet (coding the configurations)
and its set of edges is MSO-definable in a sense explained below.
Vertices may have infinite out-degree.
A play is now an infinite path through the graph (possibly with repetitions of vertices), and the parity condition inherited from $\CA$ defines the winning condition.

The next step is the transformation of the game graph $G(\CA)$ into a game graph $H(\CA)$ where each vertex has bounded finite out-degree.
In this new game graph, an edge of $G(\CA)$ is simulated by a finite sequence of ``micro-steps''.
The two players do no more alternate between moves but a player can perform a sequence of moves rather than a single move.
The new graph $H(\CA)$ has the format of a \emph{pushdown game graph}:
The vertices consist of a stack content and a control state, and edges allow to change the top of the stack according to finitely many pushdown rules.
Using the fact that in a parity game the winner has a \emph{memoryless winning strategy},
i.e., a winning strategy where the choice of a transition only depends on the current vertex,
we can code such a strategy $\sigma$ by an infinite labelled tree with labels from a \emph{finite} alphabet.

In the third step we show that the labelled trees defining winning strategies for a given player constitute a \emph{regular} tree language (at first represented by a universal two-way parity tree automaton), which is nonempty iff the given player has a winning strategy.
In this case, applying a basic fact of the theory of regular tree languages,
we can construct a \emph{regular tree} belonging to the language, representing a ``regular'' winning strategy;
the labels of the nodes of such a tree can be produced by a deterministic finite automaton.

The last steps consist in transforming a regular winning strategy over $H(\CA)$  back to the original game graph $G(\CA)$ and then to reestablish the reference to input letters of $\CA$ that were cancelled in the formation of $G(\CA)$ (i.e., the letters chosen by the players when building up a play).
This will enable us to construct the desired $\NN$-MSO-transducer realizing a winning strategy for the winner of the game~$\Gamma(L_\CA)$.

\subsection{Some preliminaries}\label{preliminaries}

\subsubsection{MSO-logic}\label{msologic}

We assume that the reader is acquainted with the fundamental results connecting MSO-logic over finite words to finite automata, over infinite words to B\"uchi automata (and parity automata),
and over infinite labelled binary trees to finite automata over infinite trees (e.g., parity tree automata).
For example,
we shall switch from automata to MSO-logic (or conversely) without giving further details when this is convenient; in the present paper we do not enter questions of complexity.
We now fix some terminology and recall some facts.

As a variant of the infinite binary tree we shall use, for any $k \geq 1$,
the $k$-ary branching tree.
Its vertices can be identified with the words over an alphabet $\AAA =\{a_1, \ldots, a_k\}$;
we write $t_{\AAA^*}$ rather than $\AAA^*$ when emphasizing the view of $\AAA^*$ as a tree.
Together with the successor functions $\mathrm{succ}_{a_i} : w \mapsto w a_i$ we obtain the tree structure
$\CT_{\AAA^*} \coloneqq (t_{\AAA^*}, \mathrm{succ}_{a_1}, \ldots, \mathrm{succ}_{a_k})$.
By Rabin's Tree Theorem, the MSO-theory of $\CT_{\AAA^*}$ is decidable.

In the proof of our main result, automata on infinite labelled trees arise.
We mention some results that we need.
If $t$ is a labelling of $t_{\AAA^*}$ by letters in some alphabet $\BBB$, we write $t(w)$ for the label at position $w \in \AAA^*$.
As automata on infinite labelled trees we use nondeterministic parity tree automata.
Such an automaton is equipped with a coloring $c\colon Q \rightarrow \{0, \ldots, d\}$, where $Q$ is the set of states, and a tree $t$ is accepted if a run on $t$ exists where on each path the maximal color occurring infinitely often is even.
In the proof of our main result we consider variants of this basic model, namely universal parity tree automata, moreover in the two-way version,
for which we use the fact that these are expressively equivalent to nondeterministic parity tree automata (cf.~\cite{KPV10}).

A $\BBB$-labelled tree $t$, labelling $t_{\AAA^*}$, is called \emph{regular} if there is a deterministic  finite automaton with output that produces label $t(w)$ after reading the input word $w \in \AAA^*$.
To construct a regular tree means to construct a corresponding generating automaton.
We use the following fundamental fact which underlies the decidability proof for the MSO-theory of $\CT_{\AAA^*}$ (cf., e.g.,~\cite{Tho97}):
\begin{theorem}\label{rabinbasis}
    It is decidable whether the tree language recognized by a given nondeterministic parity tree automaton $\CA$ is nonempty, and in that case a regular tree accepted by $\CA$ can be constructed.
\end{theorem}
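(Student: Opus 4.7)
The plan is to reduce the non-emptiness problem for a nondeterministic parity tree automaton $\CA$ (with state set $Q$, initial state $q_0$, label alphabet $\BBB$, transitions on the $k$-ary tree $t_{\AAA^*}$, and coloring $c\colon Q \to \{0,\ldots,d\}$) to the problem of deciding the winner of a parity game on a \emph{finite} arena, called the acceptance game $G(\CA)$. The two players are \emph{Automaton} (who tries to exhibit an accepted tree together with its accepting run) and \emph{Pathfinder} (who tries to exhibit a rejecting branch). From a position $q \in Q$, Automaton first chooses a label $b \in \BBB$ and a transition tuple $(q_1,\ldots,q_k)$ permitted by $\CA$ at $(q,b)$; this corresponds to fixing the label of the current tree node and the states propagated to the $k$ children. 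Pathfinder then picks a direction $a_i \in \AAA$, and the game continues from $q_i$. The coloring on Automaton-positions is inherited from $c$. A direct inspection shows that $\CA$ accepts some tree iff Automaton wins $G(\CA)$ from $q_0$.

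Next I would invoke the classical theorem on finite parity games: the winner can be computed and the winner has a \emph{positional} (memoryless) winning strategy that can be constructed together with the winning regions. Deciding non-emptiness of the tree language of $\CA$ is therefore reduced to checking whether $q_0$ lies in Automaton's winning region in $G(\CA)$.

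In the positive case, a positional winning strategy $\sigma$ for Automaton assigns to each state $q$ in her winning region a fixed label $b_q \in \BBB$ and a fixed transition tuple $(q_1^{(q)},\ldots,q_k^{(q)})$. From these data I would build a deterministic finite automaton with output whose state set is a subset of $Q$, whose initial state is $q_0$, whose transitions are given by $q \xrightarrow{a_i} q_i^{(q)}$, and whose output at state $q$ is $b_q$. Run on an input $w \in \AAA^*$ this machine produces the label $t(w) \in \BBB$ of the corresponding node, so the tree $t$ it generates is regular by definition. By construction, the (unique) run of $\CA$ on $t$ traces exactly the plays of $G(\CA)$ consistent with $\sigma$; hence the parity condition along every branch is satisfied and $t$ is accepted by $\CA$.

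The main obstacle is the positional determinacy and algorithmic solvability of parity games on finite arenas (Emerson and Jutla, Mostowski, Zielonka); this is the nontrivial classical result that drives the whole argument, and I would cite it rather than reprove it. The remaining ingredients, namely defining $G(\CA)$ precisely, verifying the equivalence between accepted trees and Automaton's winning strategies, and packaging a positional strategy as a finite output automaton, are routine bookkeeping.
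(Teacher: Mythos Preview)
Your proposal is correct and follows the standard route (acceptance game, positional determinacy of finite parity games, extraction of a regular witness). Note, however, that the paper does \emph{not} prove this theorem: it is stated in the preliminaries as a background fact, with a reference to~\cite{Tho97}. So there is no ``paper's own proof'' to compare against; your sketch supplies exactly the classical argument the paper is citing.

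One small imprecision: you write ``the (unique) run of $\CA$ on $t$'', but $\CA$ is nondeterministic, so $t$ may admit several runs. What your positional strategy $\sigma$ actually produces is both the labelling $t$ \emph{and} a specific run $\rho$ on it (the state at node $w$ being the state your generating automaton reaches after reading $w$); it is this particular $\rho$ that satisfies the parity condition on every branch, because each branch is a play consistent with the winning strategy $\sigma$. The conclusion that $t$ is accepted then follows from the existence of this one accepting run, not from uniqueness. This is a wording issue, not a gap.
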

Tree structures $\CT_{\AAA^*}$ (for suitable alphabets $\AAA$) play a major role in this paper since we shall consider game graphs (formally introduced in the following subsection) that can be defined in such tree structures by MSO-formulas.
If a graph $G = (V,E)$ is defined in $\CT_{\AAA^*}$ by an MSO-formula $\varphi(x)$ and an MSO-formula $\psi(x,y)$ in the sense that
\[
    V = \{u \in \AAA^* \mid  \CT_{\AAA^*} \models \varphi[u]\}, \ \ E = \{(u,v) \in \AAA^* \times \AAA^* \mid \CT_{\AAA^*} \models \psi[u,v]\},
\]
one says that $G$ is MSO-interpretable in $\CT_{\AAA^*}$ or that $G$ is \emph{prefix-recognizable}.
Analogously one defines prefix-recognizable graphs of slightly different format, e.g., when the vertex set $V$ falls into two disjoint sets $V_1, V_2$.
In the literature the prefix-recognizable graphs are only required to be isomorphic to a graph MSO-defined in a tree structure $\CT_{\AAA^*}$; in the present paper we use the more restrictive definition given above, where the prefix-recognizable graph itself must be MSO-defined in $\CT_{\AAA^*}$.

We shall need a characterization of the edge relation of prefix-recognizable graphs assuming that the edge relation is defined by an MSO-formula $\psi(x,y)$  in $\CT_{\AAA^*}$ (see, e.g.,~\cite{BCL07} or Chapter 15 of~\cite{GrThWi}):
\begin{proposition}\label{UVWproposition}
For each MSO-formula $\psi(x,y)$ (interpreted in $\CT_{\AAA^*}$) there are regular languages $U_i, V_i, W_i \subseteq \AAA^*$ ($i = 1, \ldots, k$)
such that
\[
    \CT_{\AAA^*} \models \psi[w_1, w_2]   \ \ \mbox{iff}   \ \ \bigvee_{i = 1, \ldots, k} \exists u \in U_i \ \exists v \in V_i \ \exists w \in W_i \  (wu = w_1 \wedge wv = w_2).
\]
Conversely, for each family of regular languages $U_i, V_i, W_i \subseteq \AAA^*$, there is an MSO-formula $\psi(x,y)$ with that property.
\end{proposition}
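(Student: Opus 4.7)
The plan is to prove the two directions separately, with the backward direction amounting to a direct MSO encoding of regular languages along paths, and the forward direction requiring an automata-theoretic decomposition of the pair $(w_1,w_2)$ at its meet.

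For the backward direction, I would first recall that every regular $L \subseteq \AAA^*$ is MSO-definable on $\CT_{\AAA^*}$: a DFA for $L$ can be simulated by existentially quantifying set variables $X_q$ (one per state) that partition the nodes of a path according to the DFA's state after reading the prefix ending at each node. Extending this, for two nodes $w$ and $w'$ with $w$ an ancestor of $w'$ (itself MSO-expressible, since the ancestor relation is MSO-definable in the tree), the condition that the label sequence on the finite path from $w$ to $w'$ lies in $L$ is MSO-expressible. The right-hand side then becomes the finite disjunction ``there is a node $w$ in (the MSO-defined) $W_i$ that is a common ancestor of $w_1,w_2$, with the word from $w$ to $w_1$ in $U_i$ and the word from $w$ to $w_2$ in $V_i$'', which is a bona fide MSO-formula $\psi(x,y)$.

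For the forward direction, I would argue via the Shelah--Gurevich composition method for $\CT_{\AAA^*}$. Fix the quantifier rank $q$ of $\psi(x,y)$ and recall that there are only finitely many $q$-types of tree structures with up to two distinguished points, so $\psi$ is determined by the $q$-type of the marked tree $(\CT_{\AAA^*}, w_1, w_2)$. For any such pair, let $w$ be the longest common prefix and write $w_1 = wu$, $w_2 = wv$ (with $u,v$ starting with distinct letters, unless one of them is empty, a boundary case handled separately). The composition theorem expresses the $q$-type of the whole marked tree as a computable function of: the $q$-type of the ``spine'' from the root to $w$ together with the $q$-types of all of the unmarked sibling subtrees hanging off that spine; the $q$-type of the path from $w$ to $w_1$ with its sibling subtrees; and the symmetric data for $v$. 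Because each unmarked subtree in $\CT_{\AAA^*}$ is isomorphic to $\CT_{\AAA^*}$ itself, its $q$-type is a fixed constant, and the combined local type at each spine vertex depends only on which branch is continued. Hence the set of spine words $w$ yielding any given sequence of local types is exactly the language recognized by a finite automaton tracking the type at each step, i.e., a regular language; and likewise for $u$ along the branch to $w_1$ and $v$ along the branch to $w_2$. Summing over the finitely many triples of type-sequences that make $\psi$ hold gives the required finite disjunction $\bigvee_i \exists u\in U_i\, \exists v\in V_i\, \exists w\in W_i\, (wu=w_1 \wedge wv=w_2)$.

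The principal obstacle is the forward direction's bookkeeping: the $q$-type of the spine depends jointly on the local transition \emph{and} on the $q$-types of the sibling subtrees attached at each spine node, so one must verify that this combined information is still tracked by a finite automaton reading $w$. This is exactly what the composition theorem delivers once one observes that unmarked full subtrees of $\CT_{\AAA^*}$ have a single fixed $q$-type; the whole argument is standard but technical. A convenient shortcut, if a shorter exposition is preferred, is to invoke the equivalence between MSO-definable graphs in $\CT_{\AAA^*}$ and prefix-recognizable graphs due to Caucal and Blumensath--Gr\"adel (cited above), which directly yields the $(U_i,V_i,W_i)$-presentation of the edge relation.
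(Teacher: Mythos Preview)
The paper does not give its own proof of this proposition: it is quoted as a known characterization, with references to \cite{BCL07} and Chapter~15 of \cite{GrThWi}. Your suggested ``shortcut'' of invoking the Caucal / Blumensath--Gr\"adel equivalence is thus exactly what the paper does.

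Your detailed argument goes beyond the paper and is correct; it is one of the standard proofs found in those references. The backward direction is routine. For the forward direction, the composition-method argument you sketch is sound: the decisive point---that every unmarked full subtree of $\CT_{\AAA^*}$ is isomorphic to $\CT_{\AAA^*}$ and hence carries a single fixed $q$-type, so that the accumulated type along each of the three spines (root to meet, meet to $w_1$, meet to $w_2$) is computed by a finite automaton reading the corresponding word---is precisely what makes the argument work, and you identify it correctly. The boundary case where one of $u,v$ is empty (i.e., $w_1$ is an ancestor of $w_2$ or conversely) does require the separate treatment you mention, but introduces no new difficulty.
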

So an edge between vertices $w_1, w_2$ can be decomposed into a triple $(u,v,w)$ of words, using a ``rule'' $(U_i, V_i, W_i)$ as follows:
There is a prefix (a common ancestor) $w \in W_i$ of $w_1$ and $w_2$ from which one can reach $w_1$ by attaching a word $u \in U_i$ and from which one can reach $w_2$ by attaching a word $v \in V_i$.

In the sequel it will be necessary to connect statements on MSO-definability in trees $\CT_{\AAA^*}$ for different alphabets $\AAA$.
In particular, we use the following fact that follows from Proposition~\ref{UVWproposition}:

\begin{proposition}\label{alphabetrestriction}
Let $\BBB$ and $\CCC$ be finite alphabets with $\BBB \subseteq \CCC$ and let $\varphi(x,y)$ be an MSO-formula such that
if $\CT_{\CCC^*} \models \varphi[u,v]$ then $u,v \in \BBB^*$.
Then there is an MSO-formula $\psi(x,y)$ such that
$\CT_{\BBB^*} \models \psi[u,v]$ iff $\CT_{\CCC^*} \models \varphi[u,v]$,
for all $u,v \in \BBB^*$.
\end{proposition}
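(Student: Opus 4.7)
The plan is to derive $\psi$ from $\varphi$ via the decomposition guaranteed by Proposition~\ref{UVWproposition}, simply restricting all the component regular languages to $\BBB^*$.

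More concretely, first I would apply Proposition~\ref{UVWproposition} to $\varphi(x,y)$ (interpreted in $\CT_{\CCC^*}$) to obtain a finite family of regular languages $U_i, V_i, W_i \subseteq \CCC^*$, for $i = 1, \dotsc, k$, such that
\[
    \CT_{\CCC^*} \models \varphi[w_1,w_2] \ \ \mbox{iff} \ \ \bigvee_{i=1}^k \exists u \in U_i \ \exists v \in V_i \ \exists w \in W_i \ (wu = w_1 \wedge wv = w_2).
\]
Then I would set $U_i' \coloneqq U_i \cap \BBB^*$, $V_i' \coloneqq V_i \cap \BBB^*$, and $W_i' \coloneqq W_i \cap \BBB^*$. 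Since $\BBB^*$ is regular over $\CCC$ and regular languages are closed under intersection, each $U_i', V_i', W_i'$ is a regular language over $\CCC$ contained in $\BBB^*$; hence it is also regular as a language over the smaller alphabet $\BBB$ (one may, e.g., take an NFA for the intersection and delete all transitions on letters in $\CCC \setminus \BBB$, which are unreachable on accepting runs). Applying the converse direction of Proposition~\ref{UVWproposition}, this time over the tree $\CT_{\BBB^*}$ with the families $(U_i', V_i', W_i')$, yields an MSO-formula $\psi(x,y)$ such that
\[
    \CT_{\BBB^*} \models \psi[w_1,w_2] \ \ \mbox{iff} \ \ \bigvee_{i=1}^k \exists u \in U_i' \ \exists v \in V_i' \ \exists w \in W_i' \ (wu = w_1 \wedge wv = w_2).
\]

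The equivalence on $\BBB^* \times \BBB^*$ is then a short verification. If $\CT_{\CCC^*} \models \varphi[u,v]$ with $u,v \in \BBB^*$, pick a witnessing triple $(u',v',w) \in U_i \times V_i \times W_i$ with $wu' = u$ and $wv' = v$; since $u,v \in \BBB^*$, every factor $u', v', w$ lies in $\BBB^*$ as well, so the same triple witnesses $\CT_{\BBB^*} \models \psi[u,v]$. Conversely, a witnessing triple in $U_i' \times V_i' \times W_i'$ is automatically a witnessing triple in $U_i \times V_i \times W_i$, giving $\CT_{\CCC^*} \models \varphi[u,v]$. The hypothesis that $\CT_{\CCC^*} \models \varphi[u,v]$ already forces $u,v \in \BBB^*$ is used only to guarantee that outside $\BBB^* \times \BBB^*$ there is nothing to match, so no additional work is needed there.

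I do not foresee a real obstacle: the argument is essentially bookkeeping on top of Proposition~\ref{UVWproposition}. The one point to be careful about is the step that a regular subset of $\BBB^*$ viewed inside $\CCC^*$ is regular over $\BBB$ as well; this is immediate, but worth stating explicitly since Proposition~\ref{UVWproposition} is formulated for languages over the alphabet of the ambient tree.
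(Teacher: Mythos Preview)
Your proposal is correct and follows precisely the route the paper indicates: the paper does not spell out a proof but simply remarks that the proposition ``follows from Proposition~\ref{UVWproposition}'', and your argument is the natural unpacking of that remark. The only detail you added beyond the obvious bookkeeping---that a regular subset of $\BBB^*$ viewed inside $\CCC^*$ is regular over $\BBB$---is indeed the one point worth making explicit, and your justification is fine.
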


Finally, we recall the fact that MSO-logic allows to express the transitive closure of binary MSO-definable relations.
Already the proof of Proposition \ref{nonempty} relied on this feature of MSO-logic.
Let us consider an MSO-formula $\psi(y_1,y_2)$ defining the edge relation of a graph.
We write $\psi^*(x,y)$ to express that there is a sequence of vertices $x_0, \ldots , x_k$ leading from $x$ to $y$ via edges defined by $\psi$, which means that $x_0 = x$, $x_k = y$, and $\psi(x_i, x_{i+1})$ holds for $i = 0, \ldots , k-1$; this condition is expressible by the following MSO-formula:
\[
    \psi^*(x,y) \ \coloneqq \ \forall X (X(x) \wedge \forall y_1 \forall y_2 ((X(y_1) \wedge \psi(y_1, y_2)) \rightarrow X(y_2)) \rightarrow X(y)).
\]

\subsubsection{Games over graphs}

In the sequel it is useful to work with the concept of game graph and related notions and results; we recall them briefly.
A \emph{game graph} has the form $G = (V_1, V_2, E)$; its vertex set $V$ is composed of two disjoint subsets $V_1, V_2$.
In strictly turn-based games, edges lead from $V_1$-vertices to $V_2$-vertices and conversely, i.e., $E \subseteq (V_1 \times V_2) \cup (V_2 \times V_1)$;
the idea is that if a play has reached vertex $u \in V_1$ then Player~I has to choose an edge from $u$ to some vertex $v \in V_2$, from where then Player~II has to choose an edge to a vertex in $V_1$, etc.
When the parity condition is used, the game graph is presented as $G = (V_1, V_2, E, c)$
where $c\colon  V_1 \cup V_2 \rightarrow \{0, \ldots , d\}$ for some natural number $d$.
At a later point we shall also consider games that are not strictly turn-based; then edges may also lead from $V_1$-vertices  to $V_1$-vertices,
analogously for $V_2$.
In order to avoid deadlocks we assume that each vertex has an out-degree $\geq 1$.

 \medskip
A \emph{play from $u_0$} in a game graph $G$ is a path through $G$ that starts in vertex $u_0$, proceeding along edges ad infinitum,
where Player~I and Player~II choose edges from vertices in $V_1$ and $V_2$, respectively.
To determine the winner of a play,
a winning condition for Player~II is  invoked; in our case we refer to a coloring $c\colon V_1 \cup V_2  \rightarrow \{0, \ldots, d\}$ and use the parity condition (requiring that the maximal color occurring infinitely often in the play is even).
A \emph{strategy from $u_0$} for Player~II, say, is a function mapping finite play prefixes starting in $u_0$ to vertices, more precisely mapping a play prefix starting in $u_0$ and ending in $u \in V_2$ to  a vertex $v$ such that $(u, v) \in E$.
If the vertex selected by the function only depends on the last vertex of the given play prefix, i.e., the ``current vertex'', the strategy is called \emph{memoryless}.
A \emph{winning strategy from $u_0$} for Player~II, say, is a strategy $\sigma$ which ensures satisfaction of the winning condition for each play starting in $u_0$ that is compatible with  $\sigma$,
i.e., for arbitrary choices of moves by the opponent Player~I.
The \emph{winning region} of a player is the set of vertices $u_0$ such that there is a winning strategy from $u_0$ for this player.
A player has a \emph{uniform} memoryless strategy on her winning region $R$ if the value for a vertex $u \in R$ can be chosen independently of the start vertices $u_0 \in R $ of plays.

 \medskip
For parity games, i.e., games with a parity winning condition,
we use the following fundamental result (\cite{EmersonJutla91}, see also the survey~\cite{Tho97}):

\begin{theorem}
    In a parity game over $G = (V_1, V_2, E, c)$, the winning regions of the two players form a partition of $V_1 \cup V_2$, and each player has a uniform memoryless winning strategy on her winning region.
\end{theorem}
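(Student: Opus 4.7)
The plan is to prove both claims---the partition $V_1 \cup V_2 = W_{\text{I}} \cup W_{\text{II}}$ into winning regions, and the existence of a uniform memoryless winning strategy on each region---simultaneously, by induction on $|V_1 \cup V_2|$. The central tool is the \emph{attractor} construction: for a set $X \subseteq V_1 \cup V_2$ and a player $P$, the set $\mathrm{Attr}_P(X)$ of vertices from which $P$ can force a visit to $X$ is obtained by a monotone least-fixed-point iteration, and a memoryless attractor strategy $\mathrm{att}_P$ witnessing this is read off from the iteration. Moreover, the complement $V \setminus \mathrm{Attr}_P(X)$ is a \emph{trap} for $P$---every $P$-vertex in it has all outgoing edges inside, every opponent vertex in it has at least one outgoing edge inside---so it induces a legitimate subgame.

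Following Zielonka's recursion, let $d$ be the maximum color of $c$, let $\alpha \in \{\pOne/,\pTwo/\}$ be the player whose parity matches $d$, and set $X = c^{-1}(d)$, $A = \mathrm{Attr}_\alpha(X)$. Applying the induction hypothesis to the strictly smaller subgame $G' := G \setminus A$ yields a partition $(W'_\alpha, W'_{\bar\alpha})$ of its vertices together with uniform memoryless strategies $\sigma', \tau'$. If $W'_{\bar\alpha} = \emptyset$, then Player~$\alpha$ wins all of $G$: on $W'_\alpha$ she plays $\sigma'$; on $A$ she plays $\mathrm{att}_\alpha$ to steer plays into $X$. Every resulting play either eventually remains in $W'_\alpha$ (won by induction) or returns to $X$ infinitely often, so the greatest infinitely recurring color is $d$, which has the parity favoring $\alpha$.

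Otherwise $W'_{\bar\alpha}$ is nonempty; put $B := \mathrm{Attr}_{\bar\alpha}(W'_{\bar\alpha})$ in the full game $G$. Since $W'_{\bar\alpha} \subseteq B$ is nonempty, the induction hypothesis applied to the strictly smaller subgame $G \setminus B$ yields a partition $(W''_\alpha, W''_{\bar\alpha})$ with uniform memoryless strategies. I will claim that the winning regions of $G$ are $W_\alpha = W''_\alpha$ and $W_{\bar\alpha} = B \cup W''_{\bar\alpha}$, with piecewise-defined memoryless strategies: Player $\bar\alpha$ plays $\mathrm{att}_{\bar\alpha}$ on $B \setminus W'_{\bar\alpha}$, her inductive $G'$-strategy $\tau'$ on $W'_{\bar\alpha}$, and her inductive $(G \setminus B)$-strategy on $W''_{\bar\alpha}$; Player $\alpha$'s uniform strategy is assembled symmetrically.

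The hard part---the step requiring genuine care---is verifying that these piecewise strategies actually win, which rests on three trap properties: (i)~$V \setminus A$ is a trap for $\alpha$, so $G'$ is well-defined; (ii)~inside $G'$, the set $W'_{\bar\alpha}$ is a trap for $\alpha$, so plays of $\bar\alpha$ under $\tau'$ stay in $W'_{\bar\alpha}$; (iii)~$V \setminus B$ is a trap for $\alpha$ in $G$, so $\alpha$'s inductive $(G \setminus B)$-strategy from $W''_\alpha$ never pushes a play into $B$ and is hence legitimate in the full game. Granted these closures, a play consistent with $\bar\alpha$'s assembled strategy from a vertex in $W_{\bar\alpha}$ is eventually confined either to $W''_{\bar\alpha}$ (won by induction) or, after finitely many visits to the attractor region $B \setminus W'_{\bar\alpha}$, to $W'_{\bar\alpha}$ (won by $\tau'$), and a play consistent with $\alpha$'s assembled strategy from $W_\alpha = W''_\alpha$ remains in $W''_\alpha$ forever and is won by induction.
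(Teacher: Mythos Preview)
The paper does not prove this theorem; it quotes it as a classical result with references to Emerson--Jutla and the survey~\cite{Tho97}. So there is no ``paper's own proof'' to compare against, and your write-up is a genuine addition. The Zielonka-style attractor recursion you sketch is indeed the standard route to memoryless determinacy, and the trap analysis you outline is the right skeleton.

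However, there is a real gap. You set up the argument as an induction on $|V_1 \cup V_2|$ and repeatedly invoke the induction hypothesis on ``strictly smaller'' subgames $G \setminus A$ and $G \setminus B$. That works only for \emph{finite} arenas. In this paper the theorem is applied to parity games over prefix-recognizable graphs and pushdown graphs, which are infinite; removing a nonempty attractor from an infinite vertex set does not yield a set of smaller cardinality, so your induction never gets off the ground. The standard repair is to induct on the number of colors: the first recursive call on $G' = G \setminus A$ is fine because $A \supseteq c^{-1}(d)$ removes the top color, but the second call on $G \setminus B$ may retain color $d$, so one cannot simply recurse again. In the infinite setting one instead iterates the $B$-removal transfinitely (or argues, as Zielonka does, that a \emph{maximal} opponent region $W_{\bar\alpha}$ exists and that its complement satisfies the first case), and then assembles the memoryless strategy across the transfinite stages. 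Without some such mechanism your proof does not cover the games the paper actually needs.

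A smaller slip: in your item~(iii) you write that $V \setminus B$ is a trap for~$\alpha$. Since $B = \mathrm{Attr}_{\bar\alpha}(W'_{\bar\alpha})$, its complement is a trap for~$\bar\alpha$, not for~$\alpha$; this is exactly what you need so that $\bar\alpha$ cannot escape $V \setminus B$ and $\alpha$'s subgame strategy on $W''_\alpha$ remains valid in the full game.
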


Following the explanations of the previous subsection, we define a parity game graph
$G = (V_1, V_2, E, c)$ with $c\colon  V_1 \cup V_2 \rightarrow \{0, \ldots , d\}$
to be \emph{prefix-recognizable} if $G$ is defined in some tree structure $\CT_{\AAA^*}$
by MSO-formulas $\varphi_1(x), \varphi_2(x), \psi(x,y), \delta_0(x), \dotsc, \delta_d(x)$
in the sense that
\begin{itemize}
\itemsep=0.9pt
	\item
        $V_1 = \{u \in \AAA^* \mid  \CT_{\AAA^*} \models \varphi_1[u]\}$,~
        $V_2 = \{u \in \AAA^* \mid  \CT_{\AAA^*} \models \varphi_2[u]\}$,
	\item $E = \{(u,v) \in \AAA^* \times \AAA^* \mid \CT_{\AAA^*} \models \psi[u,v]\}$, and
	\item $c(u) = i$ iff $\CT_{\AAA^*} \models \delta_i[u]$, for all $i \in \{0,\dotsc,d\}$.
\end{itemize}

\subsection{Building a prefix-recognizable game graph \texorpdfstring{$G(\CA)$}{G(A)}}

Let  $\CA = (Q, q_0, (\varphi_{pq}(x, {z},y))_{p,q \in Q}, c)$ be a deterministic parity $\NN$-MSO-automaton over the alphabet~$\NN$.
In order to represent the alternation of moves between players I and II, the state set $Q$ is doubled into two disjoint sets $Q_{\pOne/}, Q_{\pTwo/}$.
So for each $q \in Q$ we have two corresponding states $q_{\pOne/}$ (from where Player~I moves) and $q_{\pTwo/}$ (from where Player~II moves).

From $\CA$ we build a game graph whose vertices are finite words over the alphabet $\{1\} \cup Q_{\pOne/} \cup Q_{\pTwo/}$, representing configurations of $\CA$, now with control states from  $Q_{\pOne/} \cup Q_{\pTwo/}$.
For example, the word $1^i q_\pTwo/$ represents the configuration $(q,i)$ with Player~II to move.
An edge of $G(\CA)$ will represent a transition of $\CA$, discarding the respective input letter from $\NN$, and taking into account that from a I-state a II-state (or from a II-state a I-state) will be reached.

\begin{definition}
    For a deterministic parity $\NN$-MSO-automaton $\CA = (Q, q_0, (\varphi_{pq}(x, {z},y))_{p,q \in Q}, c)$
    we denote by $G(\CA)$ the game graph $(V_1, V_2, E, c')$ where
    \begin{itemize}
    \itemsep=0.9pt
        \item
            $V_1 = \{1^k p  \mid  k \geq 0, p \in Q_\pOne/ \}$,~$V_2 = \{1^k p  \mid  k \geq 0, p \in Q_\pTwo/ \}$\\
            with $Q_\pOne/ = \{q_\pOne/ \mid q \in Q\}$, $Q_\pTwo/ = \{q_\pTwo/ \mid q \in Q\}$,
        \item
            the set $E$ contains the edges of the form $(1^i p_\pOne/, 1^j q_\pTwo/)$ and $(1^i p_\pTwo/, 1^j q_\pOne/)$ corresponding to  transitions of $\CA$, i.e., such that
            $\CN_\mathrm{succ}  \models \exists z\, \varphi_{pq}(x,z,y) \, [i, j]$, for $p,q \in Q$,
        \item
            the coloring $c'$ is inherited from $c$, i.e., for any vertex $u = 1^i q_\pOne/$ or $u = 1^i q_\pTwo/$  we have $c'(u) = c(q)$.
    \end{itemize}
\end{definition}
Each play  in the game $\Gamma(L_\CA)$ induces a configuration sequence of $\CA$, which in turn corresponds to a path through the game graph $G(\CA)$:
For the configuration sequence $(q_0,0), (p_1, i_1), (p_2, i_2), \ldots$ the corresponding path in $G(\CA)$ is the vertex sequence
$(q_0)_\pOne/, 1^{i_1} (p_1)_\pTwo/, 1^{i_2} (p_2)_\pOne/, \ldots$.
Conversely, each infinite path through $G(\CA)$ will correspond to a  configuration sequence according to the $\CA$-transitions, where however the start vertex may be any configuration (possibly different from $(q_0,0)$), and the states of $Q_{\pOne/}$, $Q_{\pTwo/}$ are projected to $Q$.
In the edges of $G(\CA)$ the numbers serving as input entries of the $\CA$-transitions are suppressed; only the corresponding changes of configurations are recorded.
The definition of $c'$ from $c$ ensures that the winning condition over $G(\CA)$ depends only on the control states of configurations (as it does over $\CA$) and hence is satisfied for a play over $G(\CA)$ iff it is satisfied for the corresponding run of $\CA$.

\eject

The definition of $G(\CA)$ from $\CA$ immediately yields the following proposition.
\begin{proposition}
    Let $\CA$ be a deterministic parity $\NN$-MSO-automaton.
    Player~II wins the game $\Gamma(L_\CA)$ iff the vertex $(q_0)_\pOne/$ of $G(\CA)$ (representing the initial configuration $(q_0,0)$ of $\CA$)
    belongs to the winning region of Player~II in the parity game over $G(\CA)$.
    Analogously, Player~I wins the game $\Gamma(L_\CA)$ iff the vertex $(q_0)_\pOne/$ belongs to the winning region of Player~I in the parity game over $G(\CA)$.
\end{proposition}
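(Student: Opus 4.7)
The plan is to prove both equivalences by establishing a play-to-path correspondence and translating strategies in both directions; I describe Player~II's case, the case for Player~I being symmetric.

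First I would make precise the map $\pi$ sending a play $\alpha = m_0 n_0 m_1 n_1 \ldots$ of $\Gamma(L_\CA)$ to an infinite path in $G(\CA)$ starting at $(q_0)_\pOne/$. Since $\CA$ is deterministic, $\alpha$ induces a unique run $(q_0,0),(p_1,i_1),(p_2,i_2),\ldots$; the corresponding sequence $\pi(\alpha) = (q_0)_\pOne/, 1^{i_1}(p_1)_\pTwo/, 1^{i_2}(p_2)_\pOne/, \ldots$ is an infinite path in $G(\CA)$, because each transition of $\CA$ witnesses an edge of $G(\CA)$ by definition of $E$. Conversely, every infinite path from $(q_0)_\pOne/$ arises as $\pi(\alpha)$ for some play $\alpha$: each of its edges is, by definition of $E$, witnessed by at least one input number, so one can successively pick such witnesses. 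Because $c'$ is inherited from $c$ on both $Q_\pOne/$ and $Q_\pTwo/$, the parity condition along $\pi(\alpha)$ coincides with the parity condition along the run of $\CA$ on $\alpha$, hence $\alpha \in L_\CA$ iff $\pi(\alpha)$ is won by Player~II in the parity game on $G(\CA)$.

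I would then translate strategies both ways. Given a Player~II winning strategy $\sigma$ in $\Gamma(L_\CA)$, define a strategy $\tau$ from $(q_0)_\pOne/$ that maintains, alongside the observed $G(\CA)$-path, a $\Gamma(L_\CA)$-history whose $\pi$-image is that path: whenever Player~I passes from $1^i p_\pOne/$ to $1^j q_\pTwo/$, $\tau$ appends any witness $m$ with $\CN_\mathrm{succ} \models \varphi_{pq}[i,m,j]$ to the maintained history; at a Player~II vertex, $\tau$ queries $\sigma$ on the maintained history for a number $n$ and moves to the unique successor induced by $n$ (unique by determinism of $\CA$). Every $\tau$-conforming path is then $\pi(\alpha)$ for a $\sigma$-conforming play $\alpha \in L_\CA$, so $\tau$ is winning from $(q_0)_\pOne/$. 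Conversely, given a (uniform memoryless) Player~II winning strategy $\tau$ from $(q_0)_\pOne/$, define $\sigma$ as follows: on history $m_0 n_0 \ldots m_k$, compute by determinism of $\CA$ the induced configuration and its $G(\CA)$-vertex, let $\tau$ pick a successor from that vertex, and set $\sigma(m_0 \ldots m_k)$ to any input number realizing the corresponding edge. Every $\sigma$-conforming play $\pi$-projects to a $\tau$-conforming path, which is winning, so the play lies in $L_\CA$.

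The only subtlety, which I expect to be the main obstacle to the cleanest phrasing, is that the correspondence between input numbers and outgoing edges of $G(\CA)$ is not a bijection: because the input letter $z$ is existentially quantified in the definition of $E$, several numbers may realize the same edge, and each edge is realized by at least one number. This is, however, exactly what the two translations need --- existence of a realizing witness in either direction, together with determinism of $\CA$ to pin down the next vertex --- so no genuine difficulty arises. The two translations together establish Player~II's equivalence, and interchanging the roles of $\pOne/$ and $\pTwo/$ gives Player~I's equivalence.
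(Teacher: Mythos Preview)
Your proposal is correct and is essentially a detailed expansion of what the paper treats as immediate: the paper sketches the play-to-path correspondence (and notes that $c'$ is inherited from $c$) in the paragraph preceding the proposition and then states that ``the definition of $G(\CA)$ from $\CA$ immediately yields'' the result. Your explicit strategy translations in both directions, together with your handling of the non-injectivity of input letters to edges, faithfully fill in those omitted details.
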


\begin{proposition}\label{prefixrec}
    The parity game graph $G(\CA) = (V_1, V_2, E, c')$ is prefix-recognizable.
\end{proposition}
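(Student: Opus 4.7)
\textbf{Proof plan for Proposition~\ref{prefixrec}.} The plan is to take as ``host'' tree the structure $\CT_{\AAA^*}$ over the alphabet $\AAA = \{1\} \cup Q_\pOne/ \cup Q_\pTwo/$, so that every vertex $1^k p$ of $G(\CA)$ is literally a node of the tree, and then to exhibit MSO-formulas $\varphi_1(x)$, $\varphi_2(x)$, $\psi(x,y)$, and $\delta_0(x), \ldots, \delta_d(x)$ that define $V_1$, $V_2$, $E$, and the fibres of $c'$, respectively. The central observation is that $\CN_\mathrm{succ}$ embeds into $\CT_{\AAA^*}$ as the \emph{$1^*$-branch} via $n \mapsto 1^n$, with $\mathrm{succ}$ corresponding to $\mathrm{succ}_1$. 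The predicate ``$x$ lies on the $1^*$-branch'' is MSO-definable by a formula $\mathrm{Br}(x)$ using exactly the transitive-closure trick already employed in the proof of Proposition~\ref{nonempty}: $x$ belongs to every set that contains the root and is closed under~$\mathrm{succ}_1$. The root itself is definable as the unique node without predecessor.

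Using $\mathrm{Br}$ as a relativiser, every MSO-formula $\chi(\bar x, \bar X)$ over $\CN_\mathrm{succ}$ translates into an MSO-formula $\tilde\chi(\bar x, \bar X)$ over $\CT_{\AAA^*}$: first-order quantifiers are relativised to $\mathrm{Br}$, set quantifiers are relativised to subsets of $\{y : \mathrm{Br}(y)\}$, the constant~$0$ is the root, and $+1$ becomes $\mathrm{succ}_1$. The standard soundness argument for such a relativisation gives $\CN_\mathrm{succ} \models \chi[n_1,\ldots,n_r]$ iff $\CT_{\AAA^*} \models \tilde\chi[1^{n_1},\ldots,1^{n_r}]$. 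Apply this to each transition formula $\varphi_{pq}(x,z,y)$ to obtain a tree-side formula $\tilde\varphi_{pq}(x,z,y)$.

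With the translated formulas in hand, the definitions are routine. Set
\[
\varphi_1(x) \ \coloneqq \ \bigvee_{p \in Q} \exists x'\,\bigl(\mathrm{Br}(x') \wedge x = \mathrm{succ}_{p_\pOne/}(x')\bigr),
\]
and analogously $\varphi_2(x)$ with $p_\pTwo/$ in place of $p_\pOne/$. The edge formula is
\begin{align*}
\psi(x,y) \ \coloneqq \ &\bigvee_{p,q \in Q} \exists x' \exists y'\,\bigl(\mathrm{Br}(x') \wedge \mathrm{Br}(y') \wedge x = \mathrm{succ}_{p_\pOne/}(x') \wedge y = \mathrm{succ}_{q_\pTwo/}(y') \\
& \wedge \ \exists z\,(\mathrm{Br}(z) \wedge \tilde\varphi_{pq}(x',z,y'))\bigr)\\
\ \vee \ &\bigvee_{p,q \in Q} \exists x' \exists y'\,\bigl(\mathrm{Br}(x') \wedge \mathrm{Br}(y') \wedge x = \mathrm{succ}_{p_\pTwo/}(x') \wedge y = \mathrm{succ}_{q_\pOne/}(y') \\
& \wedge \ \exists z\,(\mathrm{Br}(z) \wedge \tilde\varphi_{pq}(x',z,y'))\bigr).
\end{align*}
Finally the colouring is read off the last letter: for each $i \in \{0,\ldots,d\}$,
\[
\delta_i(x) \ \coloneqq \ \bigvee_{\substack{q \in Q \\ c(q)=i}}\ \bigvee_{\alpha \in \{\pOne/,\pTwo/\}} \exists x'\,\bigl(\mathrm{Br}(x') \wedge x = \mathrm{succ}_{q_\alpha}(x')\bigr).
\]

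The only point deserving any care is the translation step; once the relativisation to $\mathrm{Br}$ is in place, correctness of the definitions of $V_1$, $V_2$, $E$, $c'$ reduces to matching the embedding $n \mapsto 1^n$ against the definition of $G(\CA)$, which is immediate. There is no genuine obstacle; the content of the proposition is essentially that ``everything needed to describe $G(\CA)$ is already MSO-expressible in $\CN_\mathrm{succ}$'', which the transition formulas of $\CA$ supply by hypothesis.
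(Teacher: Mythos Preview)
Your proof is correct and lands on essentially the same formulas as the paper, but the route you take for the edge relation is genuinely different. The paper does not relativise the transition formulas syntactically; instead it invokes Remark~\ref{automatic}: the binary relation $\{(i,j)\mid \CN_\mathrm{succ}\models \exists z\,\varphi_{pq}[i,j]\}$ is automatic, so there is a finite automaton $\CA_{pq}$ reading the pair $(1^i,1^j)$ synchronously, and the standard automata-to-MSO translation then yields the tree formula $\varphi'_{pq}(x,y)$. Your argument bypasses automata entirely by observing that $\CN_\mathrm{succ}$ is MSO-interpretable in $\CT_{\AAA^*}$ as the definable substructure on the $1^*$-branch, so any MSO-formula over $\CN_\mathrm{succ}$ relativises directly. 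Your approach is a touch more elementary and self-contained; the paper's approach ties back to the automatic-relation viewpoint that it has set up earlier and will use again (e.g.\ in Lemma~\ref{translation}, where the converse direction is handled via Proposition~\ref{alphabetrestriction}). For $V_1$, $V_2$, and the colour classes both proofs do the same thing: the paper just says ``regular, hence MSO-definable'', while you spell out the last-letter test explicitly.
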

\begin{proof}
    The vertices of $G(\CA)$ are words over the alphabet $\BBB \coloneqq \{1\} \cup Q_{\pOne/} \cup Q_{\pTwo/}$ and thus nodes of the $(2 |Q| +1)$-branching tree $\CT_{\BBB^*}$.
    The two vertex sets $V_1 = 1^* Q_{\pOne/}$ and $V_2 = 1^* Q_{\pTwo/}$ are regular languages and hence MSO-definable in $\CT_{\BBB^*}$.
    The same is true for the preimages of the colors under the coloring function $c'$, which can be written as $1^* c^{-1}(0), \dotsc, 1^* c^{-1}(d)$, where $c$ is the coloring function of~$\CA$.
    Now we show that the edge relation $E$ is MSO-definable in $\CT_{\BBB^*}$.
    First we find an MSO-formula $\varphi'_{pq}(x,y)$ such that
    \[
        \CN_\mathrm{succ}   \models \exists z \, \varphi_{pq} \, [i,j] \mbox{  \ \ iff \ \ }  \CT_{\BBB^*} \models \varphi'_{pq} \, [1^i,  1^j].
    \]
    The binary relation defined by $\exists z\,\varphi_{pq}(x,z,y) $ is automatic (cf.~Remark \ref{automatic}) and thus definable by a finite automaton $\CA_{pq}$ scanning pairs $(1^i, 1^j)$ of words in parallel (and terminating when the longer word ends).
    The translation of $\CA_{pq}$ to MSO-logic over $\CT_{\BBB^*}$ yields the desired formula $\varphi'_{pq}(x,y)$.
    The edge relation of $G(\CA)$ is then defined by the following formula:
    \[
        \psi(x,y) \coloneqq \bigvee_{(p,q) \in (Q_{\pOne/} \times Q_{\pTwo/}) \cup (Q_{\pTwo/} \times Q_{\pOne/})}
        \exists x^- \exists y^- \bigl( \mathrm{succ}_p(x^-,x) \wedge \mathrm{succ}_q(y^-,y) \wedge \varphi'_{pq}(x^-, y^-) \bigr).
    \]

    \vspace*{-8mm}
\end{proof}

We show the main result, Theorem~\ref{maintheorem}, by an analysis of parity games over the prefix-recognizable game graphs $G(\CA)$ associated with deterministic parity $\NN$-MSO-automata $\CA$.
We use the fact that these graphs have a special format, which allows a simpler analysis:
We say that a prefix-recognizable game graph $(V_1, V_2, E, c)$ with $V_1, V_2 \subseteq \BBB^*$ is \emph{normal} if $\BBB$ can be decomposed as $\AAA \cup Q_\pOne/ \cup Q_\pTwo/$ such that $V_1 = \AAA^* Q_\pOne/$, $V_2 = \AAA^* Q_\pTwo/$, and the color $c(v)$ of a vertex $v = wq$ (with $w \in \AAA^*$ and $q \in Q_\pOne/ \cup Q_\pTwo/$) only depends on $q$.
Clearly, the game graph $G(\CA)$ associated with a given deterministic parity $\NN$-MSO-automata $\CA$ has this property, with $\AAA = \{1\}$.

We shall present a complete solution of parity games over normal prefix-recognizable game graphs in the sense that a certain format of memoryless winning strategy for the respective winner is established, namely as an MSO-definable strategy.
Over a prefix-recognizable game graph defined by MSO-formulas in a tree structure $\CT_{\BBB^*}$, a memoryless strategy $\sigma$ is MSO-definable, say for Player~II, if there is an MSO-formula $\chi_\sigma(x,y)$ over $\CT_{\BBB^*}$ that specifies for each vertex $x$ of the winning region of  Player~II the vertex~$y$ to which Player~II should move according to $\sigma$.
It may be useful to state this result separately from our main theorem~\ref{maintheorem}.

\begin{theorem}\label{mainprefrec}
    Consider a normal prefix-recognizable parity game graph $G$, defined by MSO-formu\-las over a tree $\CT_{\BBB^*}$.
    \begin{enumerate}[label=(\alph*)]
        \item\label{mainprefrec-part-a} The winning regions of the two players are MSO-definable in $\CT_{\BBB^*}$ and hence decidable.
 \eject
        \item\label{mainprefrec-part-b} For a given initial vertex $u_0$ in the winning region of Player~II (or Player~I), there is an MSO-formula $\chi(x,y)$ that defines (again in $\CT_{\BBB^*}$) a memoryless winning strategy from $u_0$ for Player~II (or Player~I, respectively).
    \end{enumerate}
\end{theorem}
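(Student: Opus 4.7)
The plan is to reduce a parity game on the normal prefix-recognizable graph $G$ to a parity game on a pushdown game graph $H$ through a ``micro-step'' simulation, invoke known MSO-definability results for winning strategies in pushdown parity games, and then translate the resulting MSO-definable strategy back to $G$. Concretely, I would first apply Proposition~\ref{UVWproposition} to write the edge formula $\psi(x,y)$ as a finite disjunction over rules $(U_i, V_i, W_i)$. Each edge $(w_1, w_2) = (wu, wv)$ with $w \in W_i$, $u \in U_i$, $v \in V_i$ can then be simulated as a finite sequence of one-letter stack operations: pop $u$ while running a DFA for $U_i$ (and checking that the remaining word lies in $W_i$), then push $v$ letter by letter while running a DFA for $V_i$. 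Throughout such a sequence of micro-steps simulating a single $G$-edge chosen from $u \in V_2$, all intermediate vertices are assigned to Player~II, and their colors are set to a dummy value (e.g.\ $0$) so as not to affect the parity condition; only the vertices corresponding to genuine $G$-vertices carry the original color.

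In the second step I would invoke the solution of parity games on pushdown graphs. By Walukiewicz's theorem and its refinements via alternating two-way tree automata~\cite{KPV10,cachat_uniform_2003}, the winning regions of a pushdown parity game are regular sets of stack contents and each player has a uniform memoryless winning strategy whose choice depends only on the control state and (a finite prefix of) the top of the stack. Equivalently, a memoryless winning strategy for, say, Player~II can be encoded as a labeling of the tree $\CT_{\AAA^*}$ (with $\AAA$ the stack alphabet of $H$) by a \emph{finite} alphabet; the labelings that encode winning strategies form a regular tree language, recognizable by a universal two-way parity tree automaton. Translating this to an equivalent nondeterministic parity tree automaton and applying Theorem~\ref{rabinbasis}, one extracts a regular winning strategy, which in turn is definable by an MSO-formula over $\CT_{\AAA^*}$.

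For part~(a), the winning region of each player in $H$ (and hence in $G$, since the genuine $G$-vertices are exactly the stack contents ending in a symbol from $Q_\pOne/ \cup Q_\pTwo/$) is MSO-definable in $\CT_{\AAA^*}$, and by Proposition~\ref{alphabetrestriction} this definability transfers to $\CT_{\BBB^*}$ after restriction to $V_1 \cup V_2 \subseteq \BBB^*$; decidability follows from Rabin's Tree Theorem. For part~(b), once a memoryless winning strategy $\widetilde{\sigma}$ in $H$ is given by an MSO-formula $\widetilde{\chi}(x,y)$ over $\CT_{\AAA^*}$, the corresponding strategy on $G$ is obtained by composing complete micro-step sequences: the formula
\[
\chi(x,y) \;\coloneqq\; \exists \text{(finite sequence of micro-vertices from $x$ to $y$, each step following $\widetilde{\chi}$)},
\]
expressible via the reachability / transitive-closure construction recalled at the end of Subsection~\ref{msologic} applied to the micro-step relation, defines a memoryless winning strategy for Player~II on her winning region in $G$. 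The symmetric argument works for Player~I.

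The main obstacle I expect is the careful design of the micro-game so that three things hold simultaneously: (i) memoryless strategies in $H$ project to memoryless strategies in $G$ (which forces the ownership of all micro-vertices belonging to a single $G$-edge to lie with the player choosing that edge, with his opponent having no genuine choice in between); (ii) the parity condition on $H$ is equivalent, along plays projecting to infinite $G$-plays, to the parity condition on $G$ (which is why dummy colors must be used for intermediate micro-vertices, and why the DFAs running alongside the pushdown operations must be guaranteed to terminate); and (iii) the MSO-definition obtained over the larger tree $\CT_{\AAA^*}$ can be faithfully pulled back to the original tree $\CT_{\BBB^*}$ on the normal vertex set $\AAA^* Q_\pOne/ \cup \AAA^* Q_\pTwo/$, where the normality hypothesis is crucial because it lets us identify $G$-vertices with nodes of $\CT_{\BBB^*}$ ending in a distinguished state symbol and lets us invoke Proposition~\ref{alphabetrestriction} without distortion.
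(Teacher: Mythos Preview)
Your overall architecture matches the paper's: reduce $G$ to a pushdown game $H$ via micro-steps, extract a regular winning strategy on $H$ using two-way alternating tree automata and Theorem~\ref{rabinbasis}, express it in MSO, and pull it back to $G$ via transitive closure and Proposition~\ref{alphabetrestriction}. For part~(a) the paper actually bypasses the pushdown reduction entirely and just cites that winning regions of any parity game are MSO-definable in the game structure (Walukiewicz), then composes with the MSO-interpretation of $G$ in $\CT_{\BBB^*}$; your route via $H$ would also work but is a detour.

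The real issue is in your micro-game. You propose that all intermediate vertices of a simulated edge belong to the moving player and carry dummy colour~$0$. Two things break. First, with colour~$0$ the moving player can stall forever in the push phase and \emph{win} the parity condition; you flag termination as an obstacle but do not give the fix. The paper handles this by shifting all $G$-colours up by~$2$ and giving Player~II's micro-vertices colour~$1$ (and Player~I's colour~$0$), so that a player who never completes a macro-step loses. Second, and more seriously, your parenthetical ``and checking that the remaining word lies in $W_i$'' hides a genuine difficulty: the prefix $w$ is the \emph{entire remaining stack}, and a pushdown device cannot test $w\in W_i$ without destroying the stack content that is still needed for the push phase. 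The paper's solution is an opponent-challenge gadget: after Player~II has popped $u$ and reached an accepting $\CU$-state, control passes to Player~I, who may either (i) empty the stack running a DFA for $W_i$ and send the play to a sink that is winning for Player~I iff $w\notin W_i$, or (ii) hand control back to Player~II for the push phase. This is exactly what forces Player~II to pop only to positions with $w\in W_i$, and it is why the back-translation formula $\chi_\sigma$ in the paper has to describe a path with a Player~I micro-step sandwiched in the middle, not simply the $\widetilde\chi$-transitive closure you wrote. Without this gadget, point~(i) of your obstacle list (projection of memoryless $H$-strategies to memoryless $G$-strategies) fails: a memoryless winning strategy for Player~II on your $H$ could select, at a given $G$-vertex, a micro-path that does not correspond to any $G$-edge.
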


Let us first show part~\ref{mainprefrec-part-a}.
We recall the following result of~\cite{Wal02} on definability of winning regions:
\begin{proposition}
    For a parity game over a game graph $G = (V_1, V_2, E, c)$ with coloring function $c\colon V_1 \cup V_2 \rightarrow \{0, \ldots, d\}$,
    the winning regions of the two players are MSO-definable in the structure $G' = (V_1, V_2, E, C_0, \ldots, C_d)$ with $C_i = \{u \in V_1 \cup V_2 \mid c(u)=i\}$ (for $i \in \{0,\dotsc,d\}$) by MSO-formulas $\varphi_\pOne/(x)$ and $\varphi_\pTwo/(x)$.
\end{proposition}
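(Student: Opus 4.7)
The plan is to invoke the classical mu-calculus characterization of parity game winning regions and then translate it directly into MSO, using the fact that both least and greatest fixed points of MSO-definable monotone operators are themselves MSO-definable.

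First, I would define the ``controlled predecessor'' operator expressing that Player~II can force the next vertex to lie in a set $X$:
\[
    \Psi(X, x) \; \coloneqq \; \bigl( V_2(x) \wedge \exists y\, (E(x,y) \wedge X(y)) \bigr) \vee \bigl( V_1(x) \wedge \forall y\, (E(x,y) \to X(y)) \bigr).
\]
This is a simple MSO-formula in the structure $G'$, and it is monotone in $X$. With this building block, the winning region of Player~II for the parity condition ``maximal color seen infinitely often is even'' is captured by the nested fixed-point expression
\[
    \Phi_{\pTwo/}(x) \; \coloneqq \; \eta_d X_d.\,\eta_{d-1} X_{d-1}.\,\cdots\,\eta_0 X_0.\; \bigvee_{i=0}^{d} \bigl( C_i(x) \wedge \Psi(X_i, x) \bigr),
\]
where $\eta_i = \nu$ if $i$ is even and $\eta_i = \mu$ if $i$ is odd. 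That this fixed-point formula computes exactly the winning region of Player~II is the classical result of Emerson and Jutla, building on Streett--Emerson; its correctness rests on the existence of memoryless winning strategies in parity games, which the theorem preceding this proposition already records.

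Second, I would translate this fixed-point expression into pure MSO by the standard inside-out encoding: the least fixed point $\mu X.\, \theta(X, x)$ is rendered by the MSO-formula $\forall X.\, (\forall y.\, \theta(X,y) \to X(y)) \to X(x)$, and the greatest fixed point $\nu X.\, \theta(X, x)$ is rendered by $\exists X.\, X(x) \wedge \forall y.\, (X(y) \to \theta(X,y))$. Applying these translations in turn to $X_0, X_1, \dots, X_d$ converts $\Phi_{\pTwo/}(x)$ into an MSO-formula $\varphi_{\pTwo/}(x)$ over $G' = (V_1, V_2, E, C_0, \ldots, C_d)$. For Player~I, determinacy of parity games guarantees that the winning regions partition $V_1 \cup V_2$, so one can simply set $\varphi_{\pOne/}(x) \coloneqq (V_1(x) \vee V_2(x)) \wedge \neg \varphi_{\pTwo/}(x)$; alternatively, a dual fixed-point formula obtained by swapping the roles of the two players and shifting parities gives $\varphi_{\pOne/}$ directly.

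The main obstacle is not the MSO-translation, which is entirely mechanical, but establishing the correctness of the alternating fixed-point characterization itself. Verifying that the $(d{+}1)$-fold nested induction precisely tracks the memoryless strategies of the two players requires the non-trivial argument that finite unfoldings of the innermost $\mu$ correspond to reaching certain ``good'' priorities within a bounded number of steps, while the outer $\nu$ captures that this can be repeated forever. I would cite this step to Emerson--Jutla (or to Walukiewicz, already referenced in the excerpt), and thereafter the proof reduces to the syntactic translation sketched above.
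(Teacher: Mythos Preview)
Your proposal is correct. The paper itself does not prove this proposition: it is stated as a known result and attributed to Walukiewicz~\cite{Wal02} with the phrase ``We recall the following result of~\cite{Wal02}.'' Your reconstruction via the Emerson--Jutla modal $\mu$-calculus characterization of parity winning regions, followed by the standard MSO encoding of least and greatest fixed points, is exactly the canonical argument underlying that citation, and your explicit formula and translation are correct. In effect you have unpacked what the paper leaves as a black-box reference; there is nothing to compare beyond noting that the paper delegates the proof entirely while you sketch it.
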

For a parity game graph that can be defined by MSO-formulas over a tree $\CT_{\BBB^*}$, the MSO-formulas $\varphi_\pOne/(x)$ and $\varphi_\pTwo/(x)$ over $G'$ can be transformed into MSO-formulas over~$\CT_{\BBB^*}$.
Combined with the fact that the MSO-theory of $\CT_{\BBB^*}$ is decidable, this yields part~\ref{mainprefrec-part-a} of Theorem~\ref{mainprefrec}.

Part~\ref{mainprefrec-part-a} of Theorem~\ref{mainprefrec} implies the first claim of the main theorem~\ref{maintheorem}:
The winner of the game $\Gamma(L_\CA)$ can be determined effectively.

The subsequent three subsections~\ref{extending} to \ref{transferring} will develop the proof of part~\ref{mainprefrec-part-b} of Theorem~\ref{mainprefrec}.
The last subsection will then transform the MSO-formula $\chi(x,y)$ into a transducer as required in Theorem~\ref{maintheorem}.

Results similar to Theorem~\ref{mainprefrec} were shown by Cachat~\cite{cachat_uniform_2003} and by Kupferman, Piterman, and Vardi~\cite{KPV10}.
Already in~\cite{cachat_uniform_2003}, a reduction of games over prefix-recognizable graphs to games over pushdown graphs appears, similar to the construction given in Section~\ref{extending} below.
We present here a sharpened form in which the corresponding winning strategies are shown to be MSO-definable.
In~\cite{KPV10}, the solution of games over prefix-recognizable graphs relies on the existence of a suitable finite partition of the prefix rewriting rules.
A use of such a partition is avoided in the construction below.

\subsection{Extending a prefix-recognizable to a pushdown game graph}\label{extending}

The aim of this section is to modify a given normal prefix-recognizable game graph $G$ (such as $G(\CA)$) to a form where strategies and winning strategies can be defined essentially by labelling the graph vertices with labels from a finite alphabet.
In a prefix-recognizable game graph, where vertices may have infinite out-degree, a strategy would have to specify a ``next vertex'' among a possibly infinite set of neighbors of a given vertex; so an infinite label alphabet such as $\NN$ seems necessary.
We shall construct from $G$ a ``pushdown game graph'' $H$ where this problem is avoided and in which plays over $G$ can be simulated.

\begin{definition}
    A \emph{pushdown parity game system} over the finite pushdown alphabet $\AAA$ is a tuple $(P_\pOne/, P_\pTwo/, \Delta, c)$
    where $P_\pOne/, P_\pTwo/$ are disjoint finite sets of states, $c\colon (P_\pOne/ \cup P_\pTwo/) \to \{0,\dotsc,d\}$ is a coloring, and $\Delta$ is a finite set of pushdown rules of the form $ap \rightarrow vq$ with $p, q \in P_\pOne/ \cup P_\pTwo/$ and with $a \in \AAA, v \in \AAA^*$ or with $a = \bot, v  \in \bot \AAA^*$ (where $\bot$ is silently added as an extra symbol that is not in~$\AAA$).%
    \footnote{
        In the literature,
        the application of pushdown rules is often done by prefix-rewriting; we use here suffix-rewriting since it fits better to the view that pushdown rules describe steps from one tree node to another.
    }
\end{definition}
Note that these rules define a more general format of pushdown game than sometimes found in the literature:
We do no more require a change from I-states to II-states and conversely; it is possible to  have sequences of moves executed by one player only.
A pushdown parity game system $(P_\pOne/, P_\pTwo/, \Delta, c)$ over $\AAA$ induces a parity game graph
$H = (V_1, V_2, E, c')$ where $V_1 = \AAA^* P_\pOne/$, $V_2 = \AAA^* P_\pTwo/$, $c'(w p) = c(p)$, and the edge relation is
\[
    E = \{(w a p, w v q) \mid (ap \rightarrow vq) \in \Delta \text{ with } a \in \AAA, w \in \AAA^*\} \cup
    \{(p, v q) \mid (\bot p \rightarrow \bot vq) \in \Delta\}.
\]
For a vertex $w p \in V_1 \cup V_2$, we refer to $w \in \AAA^*$ as the stack content and to $p \in P_\pOne/ \cup P_\pTwo/$ as the control state of the vertex.
Note that we use the symbol $\bot$ in pushdown rules to represent the bottom of the stack, but for technical convenience, we do not include it in the stack content of the vertices;
so, for instance, the move from vertex $p$ to vertex $aq$ is induced by the pushdown rule $\bot p \rightarrow \bot aq$.

 \medskip
A game graph derived in this way is called a \emph{pushdown parity game graph} over the alphabet $\CCC \coloneqq \AAA \cup P_\pOne/ \cup P_\pTwo/$.
Its vertices are words over the alphabet $\CCC$ and thus nodes of the tree $\CT_{\CCC^*}$.
It is clear that the relation $E$ is definable in $\CT_{\CCC^*}$ by an MSO-formula (even a first-order formula).
Thus we obtain immediately that every pushdown parity game graph is prefix-recognizable.
However, in contrast to prefix-recognizable graphs in general, the out-degree of the vertices in a pushdown graph is always bounded by the number of pushdown rules and hence finite.

We shall extend a given normal prefix-recognizable parity game graph $G = (U_1, U_2, E, c)$ into a pushdown parity game graph $H = (U'_1, U'_2, E', c')$ in such a way that winning strategies over $H$ can be transferred back to $G$.
As mentioned earlier, the idea is to simulate the edges in $E$ by sequences of ``micro-steps''.
For the realization of these micro-steps auxiliary vertices are introduced; we shall have $U_1 \subseteq U'_1$, $U_2 \subseteq U'_2$.
Vertices of $H$ that do not belong to $G$ are called
\emph{proper $H$-vertices}.
All proper $H$-vertices will have color 0 or 1.
We assume that all $G$-vertices have a color greater than 1, which can always be achieved by shifting their colors by 2.
So in a play that visits $G$-vertices infinitely often, the colors of these $G$-vertices determine the winner.
Among the proper $H$-vertices we shall introduce special ``sink vertices'', in which a play will loop forever; there is one such vertex $u_\text{win}^\pOne/$ in which Player~I will win (it receives color 1) and a vertex $u_\text{win}^\pTwo/$ in which Player~II will win (its color being 0).

 \medskip
To prepare for the construction of $H$, we consider a representation of the edge relation $E$ of $G$ in terms of finite automata.
The vertex sets of $G$ are of the form $U_1 = \AAA^* Q_\pOne/$ and $U_2 = \AAA^* Q_\pTwo/$ for some finite alphabet $\AAA$ and finite sets of control states $Q_\pOne/$, $Q_\pTwo/$,
so the vertices are words over the alphabet $\BBB = \AAA \cup Q_\pOne/ \cup Q_\pTwo/$.
Since $G$ is prefix-recognizable, the edge relation $E$ is defined by an MSO-formula $\psi(x,y)$ over $\CT_{\BBB^*}$.
We apply Proposition \ref{UVWproposition} to $\psi(x,y)$, obtaining regular sets $U_{(i)}, V_{(i)}, W_{(i)} \subseteq \BBB^*$ ($i = 1, \ldots, k$) such that
\[
    \CT_{\BBB^*} \models \psi[w_1  p, w_2  q] \ \ \mbox{iff} \ \bigvee_{i = 1, \ldots, k} \exists u \in U_{(i)} \ \exists v \in V_{(i)} \ \exists w \in W_{(i)}
    (wu = w_1 p \wedge wv = w_2 q).
    \tag{$+$}\label{appliedUVW}
\]
Let $\CU_i, \CV_i, \CW_i$ be deterministic finite automata (with disjoint state sets) recognizing $U_{(i)}, V_{(i)}, W_{(i)}$, respectively ($i = 1, \ldots, k$).

 \medskip
Let $S$ be the union of the state sets of these finite automata and introduce two copies $S_1 \coloneqq \{s_\pOne/ \mid s \in S\}, S_2 \coloneqq \{s_\pTwo/ \mid s \in S\}$ of this set.
Let $S_\pOne/ \coloneqq S_1 \cup \{s_\text{win}^\pTwo/ \}$ and $S_\pTwo/ \coloneqq S_2 \cup \{s_\text{win}^\pOne/\}$ where $s_\text{win}^\pOne/, s_\text{win}^\pTwo/$ are two extra states;
these are used for the representation of the sink vertices $u_\text{win}^\pOne/ \coloneqq s_\text{win}^\pOne/$ and $u_\text{win}^\pTwo/ \coloneqq s_\text{win}^\pTwo/$.
We define the vertex sets of $H$ as $U'_1 \coloneqq \AAA^* (Q_\pOne/ \cup S_\pOne/)$ and $U'_2 \coloneqq \AAA^* (Q_\pTwo/ \cup S_\pTwo/)$.
(This is a liberal definition; not every vertex is actually needed in the construction of $H$ from $G$; each such isolated vertex we equip with a self-loop and make it a sink.)
So the vertices of $H$ are words over the alphabet
\[ \CCC = \AAA \cup Q_\pOne/  \cup Q_\pTwo/ \cup S_\pOne/ \cup S_\pTwo/. \]
The vertices with control state in $S_\pOne/$ have color 0, those with control state in $S_\pTwo/$ have color 1.
(So Player~II, for instance, will lose a play in which she stays in her proper $H$-vertices forever.)

 \medskip
Let us now define the edges of $H$, using the condition (\ref{appliedUVW}) that describes the $G$-edges we want to simulate.
First we give an informal description, explaining how to perform micro-steps that simulate an edge from a vertex $w_1 p$ of $G$, belonging to Player~II,
to another vertex $w_2 q$ of $G$, belonging to Player~I.
(The case that the players are exchanged is handled in complete analogy.)
At the vertex $w_1 p$ in $H$, Player~II first chooses an index $i$ according to (\ref{appliedUVW}); we write $(U,V,W)$ for $(U_{(i)}, V_{(i)}, W_{(i)})$ and use deterministic finite automata $\CU, \CV, \CW$ recognizing $U, V, W$, respectively.
Three phases of micro-steps are executed:
First Player~II removes a suffix $u'$ of the stack content $w_1$ (one letter at a time) while verifying that $u'p$ belongs to $U$, then it is checked whether the remaining stack content $w$ belongs to $W$, and finally Player~II appends a word $v'$ to the stack and switches to a control state $q$ such that $v'q$ belongs to $V$, thus reaching the vertex $w_2 q$ with stack content $w_2 = wv'$.
Since the tests whether $u'p \in U$ and whether $w \in W$ are carried out in reverse order, we change the automata $\CU, \CW$ to $\CU', \CW'$  accordingly.

 \medskip
Let us give more details, describing the pushdown rules that define the edges of $H$, again for the case that we start from a $G$-vertex belonging to Player~II.
At such a vertex, Player~II will have the possibility to pop letters from the stack while running the automaton $\CU'$:
For each $p \in Q_\pTwo/$ a pushdown rule $p \rightarrow s_\pTwo/$ (formally $a p \rightarrow a s_\pTwo/$ for each $a \in \AAA \cup \{\bot\}$) is introduced, where $s$ is the state of $\CU'$ that is reached from the initial state by reading the input $p$.
This rule realizes the change from a $G$-vertex to a proper $H$-vertex.
For each transition of $\CU'$ from a state $s'$ via an input $a \in \AAA$ to a state $s''$,
the pushdown rule $a s'_\pTwo/  \rightarrow s''_\pTwo/$ is added.
To avoid deadlocks, we also add the rule $\bot s'_\pTwo/  \rightarrow \bot s'_\pTwo/$ for each state $s'$ of $\CU'$.
Whenever an accepting state $s^\text{acc}$ of $\CU'$ is reached, say in a vertex $w s^\text{acc}_\pTwo/$,
Player~II has the choice to let Player~I take her turn, using a pushdown rule $s^\text{acc}_\pTwo/ \rightarrow t_\pOne/$ where $t$ is the initial state of~$\CW'$.
Player~I now has two choices:
\begin{enumerate}
    \item either to check whether the remaining stack content $w$ belongs to $W$;
        this is done by simulating $\CW'$ while emptying the stack, and proceeding to the sink vertex $u_\text{win}^\pTwo/$ of color 0 if $\CW'$ accepts or to the sink vertex $u_\text{win}^\pOne/$ of color 1 if $\CW'$ does not accept,
    \item or to let Player~II proceed again, in which case Player~II will push letters from $\AAA$ onto the stack (one at a time), simulating $\CV$.
        Suppose that at some point, Player~II has pushed a word $v'$ and $\CV$ can reach an accepting state from its current state $r$ by reading a letter $q \in Q_\pOne/ \cup Q_\pTwo/$ (which means that the word $v'q$ belongs to $V$).
        Then Player~II can choose to leave the domain of proper $H$-vertices and enter the $G$-vertex $wv'q$ by applying the pushdown rule $r_\pTwo/ \rightarrow q$.
        Note that Player~II will lose the play (by staying with vertices of color 1) if she continues to increase the stack indefinitely.
\end{enumerate}
It is easy but a little tedious to supply the corresponding pushdown rules in detail;  we do not present this list here.
The explanation of the micro-steps induced by the automaton $\CU'$ shows how to construct these pushdown rules.
(The careful reader might note that the case where the languages $U$ and $V$ contain the empty word requires extra attention.
A simple preprocessing of $U, V, W$, not specified here, will take care of this case.)
It should be clear how to proceed in a dual way to simulate edges of $G$ from vertices that belong to Player~I instead of Player~II.

Having completed the construction of the game graph $H$ from the game graph $G$,
we show in which sense the parity game over $G$ is simulated by the parity game over $H$.
\begin{proposition}\label{gameequivalence}
    Let $G$ be a normal prefix-recognizable parity game graph and let $H$ be the corresponding pushdown parity game graph as constructed above.
    A memoryless winning strategy for Player~II (or Player~I) in the parity game over $H$ from a $G$-vertex $u_0$ can be transformed into a memoryless winning strategy for Player~II (or Player~I, respectively) in the parity game over $G$ from $u_0$.
\end{proposition}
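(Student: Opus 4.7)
I treat Player~II; the Player~I case is symmetric. The plan is to define a memoryless strategy $\sigma$ in $G$ by unfolding the micro-steps of $\tau$, to establish that $\sigma$-compatible plays stay inside the $G$-part of Player~II's winning region in $H$, and then to lift a $\sigma$-compatible $G$-play to a $\tau$-compatible $H$-play so that the parity condition transfers.

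Let $R$ denote the set of $G$-vertices belonging to Player~II's winning region in $H$. For $v \in R \cap U_2$ I define $\sigma(v)$ by running $\tau$ in $H$ starting at $v$ and fixing Player~I's unique decision during the II-edge simulation (``check $W$'' versus ``let II push'') to be ``let II push''. Memorylessness of $\tau$ makes the resulting sequence deterministic, and I take $\sigma(v)$ to be the first $G$-vertex it reaches. This is well-defined because an infinite stay among proper $H$-vertices would visit a Player~II-controlled vertex (color $1$) infinitely often with no higher even color, contradicting that $\tau$ wins. Had Player~I instead chosen ``check $W$'', $\tau$'s winning property would force the check to accept (otherwise the play would end in the sink where Player~I wins), so the remaining stack $w$ belongs to $W_{(j)}$ for the index $j$ selected by $\tau$; combined with $u \in U_{(j)}$ (Player~II passed the turn at an accepting state of $\mathcal{U}'^{(j)}$) and the pushed suffix lying in $V_{(j)}$, this produces the decomposition witnessing $(v, \sigma(v)) \in E_G$.

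The central step is the invariant that every $\sigma$-compatible $G$-play from $u_0 \in R$ stays in $R$. For Player~II moves $v \to \sigma(v)$ this is immediate, since $\tau$-compatible plays remain in Player~II's winning region. For a Player~I move $v \to v'$ with $v \in R \cap U_1$ and any $G$-neighbor $v'$, suppose toward a contradiction that $v' \notin R$; then Player~I has a memoryless winning strategy $\rho$ in $H$ from $v'$. In $H$ from $v$, Player~I can then win by choosing a decomposition $(u, v_{\mathrm{mid}}, w)$ witnessing $(v, v') \in E_G$, popping $u$, and passing the turn. At Player~II's decision vertex, the dual form of the check mechanism in the I-edge simulation yields that ``check $W$'' with $w \in W_{(j)}$ routes the play to the sink where Player~I wins, while ``let I push'' lets Player~I complete the push to $v'$ and then apply $\rho$. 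Either response loses for Player~II, contradicting $v \in R$.

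With the invariant in hand, any $\sigma$-compatible $G$-play $\pi = v_0 v_1 \ldots$ from $u_0 \in R$ lifts to a $\tau$-compatible $H$-play $\pi'$ from $u_0$: Player~II's moves are lifted by the defining unfolding, and each Player~I move is lifted by the same micro-step simulation, with the previous paragraph guaranteeing that $\tau$ chooses ``let I push'' at the II-decision vertex (otherwise Player~I wins through the sink, contradicting $\tau$). Hence $\pi'$ is infinite and visits exactly the $G$-vertices of $\pi$, interleaved with proper $H$-vertices of colors in $\{0,1\}$. Since $\tau$ wins $\pi'$, the maximal color occurring infinitely often in $\pi'$ is even; as $G$-vertex colors are $\ge 2$ after the preliminary shift, this equals the maximal color occurring infinitely often in $\pi$, so Player~II wins $\pi$ in $G$. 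The main obstacle is the Player~I case of the invariance step, which rests on the dual symmetry of the check mechanism---so that Player~II is caught between a losing sink and a losing continuation whenever she tries to cut short a valid I-edge simulation from $v \in R$.
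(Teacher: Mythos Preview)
Your proof is correct and follows the same approach as the paper: define $\sigma$ from $\tau$ by unfolding the micro-steps (with Player~I's decision fixed to ``let push''), lift a $\sigma$-compatible $G$-play to a $\tau$-compatible $H$-play visiting the same $G$-vertices, and transfer the parity condition using the fact that proper $H$-vertices carry colors in $\{0,1\}$ while $G$-vertices carry colors $\ge 2$. You are in fact more explicit than the paper about how Player~I's $G$-moves are lifted and why the dual check mechanism forces $\tau$ to choose ``let I push''---the paper's proof leaves this implicit in the sentence ``there is a corresponding play over $H$ \ldots that results from the winning strategy~$\tau$.''

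One small imprecision worth noting: you define $\sigma(v)$ for all $v \in R \cap U_2$ and justify well-definedness (termination of the unfolding, and $w \in W_{(j)}$) by ``$\tau$ wins.'' But $\tau$ is only given as a winning strategy \emph{from $u_0$}; membership in $R$ guarantees that \emph{some} winning strategy exists from $v$, not that $\tau$ itself wins from $v$. The fix is easy---either replace $R$ throughout by the set of vertices $\tau$-reachable from $u_0$ (on which $\tau$ does win, by memorylessness), or replace the given $\tau$ by a uniform memoryless winning strategy on $R$, whose existence the paper records just before this proposition. The paper sidesteps the issue by never introducing $R$: it argues directly along the lifted play from $u_0$, where $\tau$'s winning property applies at every step, and this also makes your $R$-invariant paragraph unnecessary (though it is correct). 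Indeed, in your final paragraph the reason $\tau$ must choose ``let I push'' is exactly the parenthetical you give---that the alternative produces a $\tau$-compatible play from $u_0$ lost by Player~II---and not the $R$-invariant you cite.
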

\begin{proof}
    We treat the case of Player~II, the case of Player~I is analogous.

 \medskip
    Assume that Player~II has a memoryless winning strategy $\tau$ over $H$ from $u_0$.
    Whenever a $G$-vertex $w_1 p$ belonging to Player~II is reached, the strategy $\tau$ determines a path of micro-steps that leads to another $G$-vertex, based on a rule $(U,V,W)$ appearing in the equivalence (\ref{appliedUVW}):
    First, the strategy selects a sequence of micro-steps that remove letters from the stack.
    Since $\tau$ is a winning strategy, it will not stay forever within the domain of proper $H$-vertices belonging to Player~II (which have color~1) as this would result in a loss.
    Thus the micro-steps selected by $\tau$ will remove a suffix $u'$ of the stack content such that $u'p \in U$ and then enter a vertex where Player~I can choose to inspect the remaining stack content $w$ or to let Player~II proceed.
    Note that $w \in W$; otherwise, Player~I could win by inspecting the stack, contradicting the assumption that $\tau$ is a winning strategy.
    Now consider the vertex that is reached if Player~I lets Player~II proceed.
    From this vertex, the strategy $\tau$ selects a sequence of micro-steps that push a new suffix onto the stack.
    This sequence will not be infinite, as this would result in a loss for Player~II due to the colors of the proper $H$-vertices.
    Thus the strategy will push some finite suffix $v'$ and switch to a control state $q \in Q_\pOne/ \cup Q_\pTwo/$ such that $v'q \in V$, leading to the vertex $w_2 q$ with stack content $w_2 = wv'$.
    So the strategy $\tau$ determines a path from $w_1 p$ to $w_2 q$ that corresponds to an edge $(w_1 p, w_2 q)$ of $G$.

    Now let $\sigma$ be the memoryless strategy for Player~II over $G$ that chooses the edges $(w_1 p, w_2 q)$ as just described.
    For each play over $G$ from $u_0$ that results from the strategy $\sigma$ there is a corresponding play over $H$ that contains the same sequence of $G$-vertices (visiting proper $H$-vertices in between) and that results from the winning strategy $\tau$ over $H$.
    The maximal color occurring infinitely often is the same in both plays because the colors of the $G$-vertices exceed those of the proper $H$-vertices.
    Since the play over $H$ is won by Player~II, so is the play over $G$.
\end{proof}

\subsection{Constructing a regular winning strategy in a pushdown game}\label{regularstrategy}

After the transformation of a normal prefix-recognizable game graph $G$ into a pushdown game graph $H$, the next step on the way to an MSO-definable winning strategy over $G$ is the construction of a ``definable'' winning strategy over $H$.
For this, we define a simple representation of memoryless strategies on pushdown game graphs and show that the winning strategies for a player in a pushdown parity game are recognizable by appropriate finite-state automata.
If the set of winning strategies is nonempty,
we can extract a ``regular'' winning strategy.

Let $(P_\pOne/, P_\pTwo/, \Delta, c)$ be a pushdown parity game system over $\AAA$ and let $H = (V_1, V_2, E, c)$ be the associated pushdown parity game graph; its vertices are words in $\AAA^* (P_\pOne/ \cup P_\pTwo/)$.
A memoryless strategy (for Player~II, say) has to specify which rule of $\Delta$ to apply in a vertex $w p$ with $p \in P_\pTwo/$.
We do this by associating with the word $w \in \AAA^*$ an  \emph{instruction}, telling Player~II for each state $p \in P_\pTwo/$ which rule to use when visiting vertex $w p$.
So an instruction for Player~II is a function   $\iota\colon P_\pTwo/  \rightarrow  \Delta$.
Let $\CI_2$ be the (finite!) set of possible instructions for Player~II.
Thus a (memoryless) strategy $\tau$ for Player~II is fixed by a function  $f_\tau \colon \AAA^* \rightarrow \CI_2$, in other words by labelling the nodes of the tree $t_{\AAA^*}$ with values in $\CI_2$; denote this labelled tree by $t_\tau$.
The label $t_\tau(w)$ indicates that Player~II should apply the rule $[t_\tau(w)](p)$ in vertex $w p$.
We call a strategy \emph{regular} if it is defined by a regular $\CI_2$-labelled tree.
Dually, we proceed for Player~I with the obvious changes.

In the following  proposition, which is a variant of a result of~\cite{KPV10}, we use concepts and facts on tree automata as mentioned in Subsection \ref{preliminaries} above.

\begin{proposition}
    Let $(P_\pOne/, P_\pTwo/, \Delta, c)$ be a pushdown parity game system over $\AAA$, inducing a parity game graph $H$.
    Let $\CI_2$ be the set of instructions
    $\iota\colon P_\pTwo/  \rightarrow  \Delta$ of Player~II.
    \begin{enumerate}[label=(\alph*)]
    \itemsep=0.9pt
        \item The memoryless winning strategies for Player~II in the parity game over $H$ from a given initial vertex $v_0$ form a set of $\CI_2$-labelled trees, labelling $t_{\AAA^*}$, that is recognizable by a parity tree automaton.
        \item If there is a winning strategy for Player~II in the parity game over $H$ from a given vertex $v_0$, then there is an (effectively constructible) regular winning strategy.
    \end{enumerate}
    Analogously, these statements also hold for Player~I, with $\CI_2$ being replaced by the set $\CI_1$ of instructions $\iota\colon P_\pOne/  \rightarrow  \Delta$ of Player~I.
\end{proposition}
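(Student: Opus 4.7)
The plan is to prove part~(a) by constructing a two-way alternating parity tree automaton $\CB$ over $\CI_2$-labelled trees $t_{\AAA^*}$ that accepts a tree $t_\tau$ if and only if the encoded memoryless strategy $\tau$ is winning for Player~II from the given vertex $v_0$. The states of $\CB$ will be (essentially) the control states in $P_\pOne/\cup P_\pTwo/$, augmented with a flag indicating whether the automaton is currently at the tree root (so that rules of the form $\bot p\to\bot vq$ can be distinguished from stack-content rules). The invariant is that when $\CB$ is in state $p$ at node $w\in\AAA^*$ of the tree, this represents the pushdown game position $wp$. The starting configuration of $\CB$ is obtained by navigating from the root to the node corresponding to the stack content of $v_0$ and entering the control state of $v_0$; this navigation is a fixed deterministic prelude.

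From a state $p\in P_\pTwo/$ at node $w$, the automaton consults the label $\iota = t_\tau(w)\in\CI_2$ and applies the single rule $\iota(p)\in\Delta$ as a deterministic transition of $\CB$: a rule that pops one letter ($aq\to q$) causes $\CB$ to move to the parent, a rule that pushes a letter sequence is simulated by a fixed finite chain of moves down into the appropriate $a$-children, and a rule that leaves the stack unchanged is a stay transition. From a state $p\in P_\pOne/$, the automaton branches universally over all pushdown rules $ap\to vq$ whose left-hand letter $a$ matches the last letter of $w$ (or $a=\bot$ at the root), and applies the corresponding moves; this captures the fact that Player~I may choose any available move. The parity acceptance condition of $\CB$ is inherited from $c$: the priority of a state records $c(p)$ on the ``main'' transitions that correspond to actual moves in the pushdown game, while the auxiliary states introduced to carry out the multi-letter push sequences are given a neutral color that does not affect the dominant parity on any run. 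Under this design, the runs of $\CB$ on $t_\tau$ are in bijection with the plays from $v_0$ consistent with $\tau$, and the parity acceptance condition on $\CB$ coincides with the parity winning condition on these plays.

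Having obtained a two-way alternating parity tree automaton recognizing exactly the trees $t_\tau$ that encode memoryless winning strategies for Player~II from $v_0$, I would then invoke the equivalence (recalled in Subsection~\ref{msologic} and cited from~\cite{KPV10}) between two-way alternating parity tree automata and ordinary nondeterministic parity tree automata to obtain an equivalent nondeterministic parity tree automaton $\CB'$. This establishes part~(a). For part~(b), I would apply Theorem~\ref{rabinbasis}: nonemptiness of $\CB'$ is decidable, and if $L(\CB')\neq\emptyset$, a regular tree $t_\tau\in L(\CB')$ can be effectively constructed together with a generating deterministic finite automaton with output. By the bijection above, nonemptiness of $L(\CB')$ is equivalent to the existence of some memoryless winning strategy for Player~II from $v_0$ (which exists whenever any winning strategy does, by memoryless determinacy of parity games), and the extracted regular tree directly defines a regular winning strategy. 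The case of Player~I is completely analogous, swapping the roles of $\CI_1$ and $\CI_2$ and of existential/universal branching in the automaton.

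The main obstacle I expect is the careful coordination of the multi-step simulation of a single pushdown rule with the priority assignment, so that the parity condition of $\CB$ faithfully reflects the parity condition of the game rather than being distorted by the auxiliary states used to execute multi-letter pushes. A secondary subtlety is the correct handling of the bottom-of-stack marker $\bot$, which must be encoded as a position flag (``at root'') in the state component of $\CB$, since our tree convention does not record $\bot$ as an actual tree symbol. Both issues are routine once the correspondence between stack contents and tree nodes is fixed, and neither jeopardizes the conversion to a nondeterministic parity tree automaton.
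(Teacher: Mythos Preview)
Your proposal is correct and follows essentially the same approach as the paper: build a universal two-way parity tree automaton whose states are the control states (plus auxiliary states for multi-letter pushes and for the initialization walk to $v_0$), follow the label's rule deterministically at $P_\pTwo/$-states, branch universally over all applicable rules at $P_\pOne/$-states, inherit colors from $c$ with color $0$ on auxiliary states, and then invoke the conversion to nondeterministic parity tree automata and Theorem~\ref{rabinbasis} for part~(b). Your handling of the root flag and of the auxiliary-state colors matches the paper's treatment, and your appeal to memoryless determinacy to justify that nonemptiness captures the existence of any winning strategy is appropriate.
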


\begin{proof}
    It suffices to show part (a); part (b) then follows directly from the solution of the non-emptiness problem for parity tree automata (see Theorem \ref{rabinbasis}).

 \medskip
    We treat the case of Player~II; for Player~I one proceeds analogously.
    We construct a universal two-way parity tree automaton $\CB$ that works on a $\CI_2$-labelled tree $t$ whose nodes are the words in~$\AAA^*$.
    We assume that the automaton can recognize the root of the tree,
    and for any other vertex $wa$ (with $w \in \AAA^*$ and $a \in \AAA$), in the move upwards to vertex $w$ the letter $a$ is recognized;
    so, in particular, the automaton can inspect the last letter of its current vertex (by moving up one node and then moving down again).

    The states of the automaton will be the states in $P_\pOne/ \cup P_\pTwo/$ and some auxiliary states (e.g., for initialization).
    While traversing  $t$ up and down,  $\CB$ simulates
    \emph{all} plays with start vertex $v_0$ that are compatible with the labels of $t$.
    The automaton works in two-way mode and assumes state $p$ on position $w \in \AAA^*$ of the tree if in a play starting in $v_0$ the vertex $w p$ of $H$ is reached.
    If the current control state $p$ (and hence the current vertex $w p$) belongs to Player~II, the automaton $\CB$ executes the change of control state and of position  as given by the pushdown rule $[t(w)](p)$.
    If the control state $p$ belongs to Player~I, the automaton at position $w$ uses universal branching and executes in parallel
    \emph{all} changes of control state and of position as allowed by the pushdown rules in $\Delta$.
    (When a rule $ap \rightarrow vq$ involves a word $v$ of length greater than $0$, auxiliary states are used to implement the application of the rule.)
    Before starting the simulation of a play, the automaton performs a phase of initialization:
    If the given start vertex is $v_0 = w_0 p_0$, the automaton walks to node $w_0$ of the tree and assumes state $p_0$.
    The states of the tree automaton are equipped with colors according to the coloring $c$ provided by the pushdown parity game system, as far as they belong to $P_\pOne/ \cup P_\pTwo/$; all other states  receive color $0$.

    It is clear that the tree automaton $\CB$, using the parity condition for acceptance as defined,
    accepts precisely the memoryless winning strategies for Player~II from the given start vertex.
\end{proof}

\subsection{Transferring the regular winning strategy}\label{transferring}

Given a normal prefix-recognizable parity game graph $G$ and an initial vertex $u_0$, we can construct a corresponding pushdown parity game graph $H$ (see Section~\ref{extending}) and construct a regular winning strategy over $H$ from $u_0$ (see Section~\ref{regularstrategy}).
We shall now transform this regular winning strategy over $H$ into an MSO-definable winning strategy over $G$, thereby completing the proof of Theorem~\ref{mainprefrec}.
First, we represent the regular strategy over $H$ in terms of MSO-logic.

\begin{proposition}
    Given a regular strategy $\tau$ for one of the players on a pushdown game graph $H$ over the alphabet $\CCC$, there is an MSO-formula $\varphi_\tau(x,y)$ defining $\tau$ in $\CT_{\CCC^*}$.
\end{proposition}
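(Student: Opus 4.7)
The plan is to leverage the fact that a regular labelled tree is generated by a deterministic Moore machine, and to encode the behaviour of this machine together with each pushdown rule in MSO over $\CT_{\CCC^*}$. I describe the construction for a strategy $\tau$ for Player~II; the case of Player~I is symmetric with $P_\pTwo/$ replaced by $P_\pOne/$ and $\CI_2$ by $\CI_1$.

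First, fix a deterministic Moore machine $\mathcal{M} = (S, s_0, \delta, \lambda)$ over input alphabet $\AAA$ with output function $\lambda\colon S \to \CI_2$ such that $f_\tau(w) = \lambda(\delta^*(s_0,w))$ for every $w \in \AAA^*$; the existence of such an $\mathcal{M}$ is precisely what it means for the $\CI_2$-labelled tree $t_\tau$ to be regular. For each $s \in S$ the language $L_s \coloneqq \{w \in \AAA^* \mid \delta^*(s_0,w) = s\}$ is a regular subset of $\AAA^* \subseteq \CCC^*$, hence MSO-definable in $\CT_{\CCC^*}$ by a formula $\chi_s(x)$, by the standard correspondence between regular word languages and MSO over $\CCC$.

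Next, for each pushdown rule $\rho \in \Delta$ I write a first-order formula $\pi_\rho(x,y)$ over $\CT_{\CCC^*}$ expressing that $y$ arises from $x$ by applying $\rho$. For $\rho\colon ap \to b_1 \cdots b_\ell q$ with $a \in \AAA$, this formula asserts the existence of nodes $x'', x', z_1, \ldots, z_\ell$ with $\mathrm{succ}_a(x'',x')$, $\mathrm{succ}_p(x',x)$, $\mathrm{succ}_{b_1}(x'',z_1), \ldots, \mathrm{succ}_{b_\ell}(z_{\ell-1},z_\ell)$ and $\mathrm{succ}_q(z_\ell,y)$. Bottom-of-stack rules $\bot p \to \bot b_1 \cdots b_\ell q$ are handled analogously, except that the node playing the role of $x''$ is required to be the root (first-order definable as the node that is not in the image of any successor function). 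Writing $\rho_{s,p} \coloneqq \lambda(s)(p)$ for the rule prescribed by $\tau$ when the stack content drives $\mathcal{M}$ to state $s$ and the control state is $p$, the desired formula is
\[
\varphi_\tau(x,y) \;\coloneqq\; \bigvee_{s \in S}\, \bigvee_{p \in P_\pTwo/}\, \exists x' \bigl( \chi_s(x') \wedge \mathrm{succ}_p(x',x) \wedge \pi_{\rho_{s,p}}(x,y) \bigr).
\]

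There is no genuine obstacle here. The only points requiring attention are that the state reached by $\mathcal{M}$ depends only on the stack content $w$ (so splitting $x$ as $x' p$ and applying $\chi_s$ to $x'$ is the right thing to do), that the disjunction is indexed over the finite set $S \times P_\pTwo/$, and that the regularity of $L_s \subseteq \AAA^*$ transfers to MSO-definability in the larger tree $\CT_{\CCC^*}$ (which is immediate since a regular language over a subalphabet remains regular over the full alphabet). Everything else is a clean translation between two equivalent finite-state formalisms, exactly in the spirit of the classical MSO-automaton correspondence.
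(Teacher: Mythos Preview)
Your proof is correct and follows essentially the same approach as the paper: both encode the regular strategy via a deterministic finite-state machine producing instructions, express each pushdown rule application by a first-order formula over $\CT_{\CCC^*}$, and combine these in a finite disjunction. The only cosmetic difference is that the paper indexes the disjunction over instructions $\iota \in \CI_2$ (using formulas $\varphi_{\iota,p}(x)$ saying that $x = up$ with $f_\tau(u) = \iota$), whereas you index over the states $s$ of the Moore machine; since each state determines an instruction via $\lambda$, the two parameterizations are equivalent.
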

\begin{proof}
    Let $(P_\pOne/, P_\pTwo/, \Delta, c)$ be the pushdown parity game system over $\BBB$ defining $H$.
    (Note that $\CCC = \BBB \cup P_\pOne/ \cup P_\pTwo/$.)
    Suppose that $\CA_\tau$ is a deterministic finite automaton defining $\tau$, which means that after reading a stack content (i.e., a word) $u \in \BBB^*$, the automaton outputs the instruction $f_\tau(u)$ according to the strategy~$\tau$.
    For each rule $\varrho \in \Delta$ let $\varphi_\varrho(x,y)$ be an MSO-formula (we can take even a first-order formula) expressing in $\CT_{\CCC^*}$ that from vertex $x \in t_{\CCC^*}$ the application of $\varrho$ yields vertex $y$.
    Next we define the change of a vertex $x$ to a vertex $y$ that results from the application of the rule $\varrho$ as fixed by the strategy $\tau$:
    From $\CA_\tau$ we can find MSO-formulas $\varphi_{\iota, p}(x)$ stating that $x$ is of the form $u p$ such that $\CA_\tau$ produces $\iota$ after reading $u$.
    Now the change of a vertex $x$ of $H$ to $y$ according to $\tau$ is expressed by
    \[
        \varphi_\tau(x,y) \coloneqq  \bigvee_{\iota \in \CI_2, \iota(p) = \varrho} (\varphi_{\iota, p}(x) \wedge \varphi_\varrho(x, y)).
    \]

    \vspace*{-8mm}
\end{proof}

The resulting MSO-definable winning strategy over the pushdown game graph $H$ can now be transferred to the normal prefix-recognizable game graph $G$:

\begin{proposition}\label{winningstrat}
    Let $G$ be a normal prefix-recognizable game graph over the alphabet $\BBB$ and let $H$ be the corresponding pushdown parity game graph $H$ over the alphabet $\CCC \supseteq \BBB$, as described in Section~\ref{extending}.
    Given a winning strategy $\tau$ from a $G$-vertex $u_0$ in $H$ for one of the players that is defined by an MSO-formula $\varphi_\tau(x,y)$ over $\CT_{\CCC^*}$, there is a winning strategy $\sigma$ from the vertex $u_0$ in $G$ for the same player that is defined by an MSO-formula $\chi_{\sigma}(x,y)$ over $\CT_{\BBB^*}$.
\end{proposition}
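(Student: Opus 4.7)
The plan is to express in MSO over $\CT_{\BBB^*}$ the relation ``$y$ is the next $G$-vertex reached from $x$ by following $\tau$.'' Restricted to the Player~II $G$-vertices in the winning region, this relation is precisely the memoryless strategy $\sigma$ extracted from $\tau$ in the proof of Proposition~\ref{gameequivalence}, which is already shown there to be winning; it therefore suffices to give an MSO-definition of it.

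First, working inside $\CT_{\CCC^*}$, I would assemble an MSO-formula $\psi_\tau(x,y)$ describing the $\tau$-consistent $H$-edges: when $x$ is an $H$-vertex belonging to Player~II use $\varphi_\tau(x,y)$ (which is functional in $x$), and when $x$ belongs to Player~I use the full edge relation of $H$. All the auxiliary predicates needed — the vertex partition of $H$ into Player~I and Player~II vertices, and the set $\BBB^*$ of $G$-vertices inside $\CCC^*$ — are regular subsets of $\CCC^*$, hence MSO-definable in $\CT_{\CCC^*}$, while the edge relation of $H$ is first-order definable from the finitely many pushdown rules.

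Second, I would apply the MSO-expressibility of restricted transitive closure (along the lines of the closure formula at the end of Subsection~\ref{msologic}) to form a formula $\hat\chi(x,y)$ over $\CT_{\CCC^*}$ stating that $x,y \in \BBB^*$ and there is a $\psi_\tau$-path from $x$ to $y$ all of whose intermediate vertices lie in $\CCC^* \setminus \BBB^*$. By the design of the micro-steps in Section~\ref{extending} together with the argument in the proof of Proposition~\ref{gameequivalence}, for each $x$ in Player~II's winning region exactly one $y$ satisfies $\hat\chi(x,y)$: the ``inspect the stack'' branch available to the opponent ends in a sink, which is a proper $H$-vertex from which no $\BBB^*$-vertex is ever reachable, so only the ``continue'' branch contributes and it reaches the unique next $G$-vertex $\sigma(x)$. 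Since $\hat\chi$ forces both of its free variables into $\BBB^*$, Proposition~\ref{alphabetrestriction} converts it into an equivalent MSO-formula $\chi_\sigma(x,y)$ over $\CT_{\BBB^*}$, which defines $\sigma$; the construction for Player~I is dual.

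The step I expect to be most delicate is not the MSO manipulation itself but the confirmation that $\chi_\sigma$ is functional on the winning region. That requires careful bookkeeping, against the explicit list of pushdown rules introduced in Section~\ref{extending}, that every opponent deviation inside a micro-step segment either leads into a sink or rejoins the unique cooperative path to $\sigma(x)$ — exactly the property used implicitly in the proof of Proposition~\ref{gameequivalence}, but which now must be verified as a statement about $\psi_\tau$-reachability in $\CT_{\CCC^*}$.
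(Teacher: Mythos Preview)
Your plan coincides with the paper's on the two structural ingredients: express the induced $G$-strategy as a path relation in $\CT_{\CCC^*}$ using the MSO transitive-closure construction, then pull it down to $\CT_{\BBB^*}$ via Proposition~\ref{alphabetrestriction}. The one place where you diverge from the paper is precisely the step you flag as delicate, and there your proposed treatment has a genuine gap.

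You let $\psi_\tau$ follow the full $H$-edge relation at Player~I vertices and then argue that every opponent branch inside a micro-segment ``either leads into a sink or rejoins the unique cooperative path.'' That claim is not guaranteed by the construction in Section~\ref{extending} as written. The ``continue'' rule is attached to the \emph{initial} state $t$ of the automaton $\CW'$, and nothing there prevents $\CW'$ from revisiting $t$ during its run. If it does, Player~I can pop one or more stack letters (simulating part of $\CW'$) until reaching a vertex $w'\,t_\pOne/$ with $t$ again the current state, and \emph{then} apply ``continue.'' From $w'\,r_\pTwo/$ the strategy $\tau$ will duly march to some $G$-vertex $y'$, but $(x,y')$ need not be a $G$-edge at all: the stack content handed to the $\CV$-phase is $w'$ rather than the $w$ that the $\CU'$-phase actually certified. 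So your $\hat\chi$ may fail to be functional and may even relate $x$ to non-successors of $x$ in $G$.

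The paper sidesteps this entirely. Its $\psi_\tau$ is $\varphi_\tau$ restricted to proper $H$-vertices, so it only fires at Player~II vertices; the path is then assembled as a $\psi_\tau^*$-segment up to the (unique) Player~I vertex $y_1$, a single explicitly named step $\theta_{(2)}(y_1,y_2)$ encoding exactly ``Player~I lets Player~II continue,'' and another $\psi_\tau^*$-segment. Hardcoding that one opponent move makes the whole path from $x$ deterministic, so functionality is immediate and no case analysis on opponent deviations is needed. The minimal repair to your argument is therefore to replace ``full edge relation at Player~I vertices'' by ``the specific continue-edge,'' which is exactly the paper's move. (A small aside: the $G$-vertices are not all of $\BBB^*$ but the regular subset $\AAA^*(Q_\pOne/\cup Q_\pTwo/)$; this does not affect anything once stated correctly.)
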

\begin{proof}
    We treat the case of Player~II (and proceed analogously for Player~I).
    It suffices to define the desired winning strategy $\sigma$ in $G$ by an MSO-formula $\chi^{\CCC}_{\sigma}(x,y)$ over $\CT_{\CCC^*}$;
    since the strategy $\sigma$ is in fact a function over $\BBB^*$, this formula can then be transformed into a formula $\chi_{\sigma}(x,y)$ over $\CT_{\BBB^*}$ by Proposition~\ref{alphabetrestriction}.
    We now present the construction of $\chi^{\CCC}_{\sigma}(x,y)$, using the fact that MSO-logic allows to express the transitive closure of MSO-definable binary relations.

    Recall that the vertex sets $V_1$ and $V_2$ of $H$ belonging to Player~I and Player~II, respectively, are definable in $\CT_{\CCC^*}$, say by the formulas $\varphi_1(x)$ and $\varphi_2(x)$.
    Similarly, the set of $G$-vertices is definable in $\CT_{\CCC^*}$ by a formula $\varphi_G(x)$.
    We express that from the $G$-vertex $x$ (in the game graph $H$) belonging to Player~II we reach the $G$-vertex $y$ by steps according to $\varphi_\tau$, disallowing intermediate vertices of $G$;
    the resulting strategy is a winning strategy, as described in the proof of Proposition~\ref{gameequivalence}.

 \medskip
    For this we restrict the formula $\varphi_\tau(x,y)$ to proper $H$-vertices by defining
    $\psi_\tau(y_1,y_2) \coloneqq \neg \varphi_G(y_1) \wedge \neg \varphi_G(y_2) \wedge \varphi_\tau(y_1, y_2)$.
    The desired formula $\chi^{\CCC}_{\sigma}(x,y)$ expresses that $\tau$ establishes a path that leads from $x$ to a proper $H$-vertex $z_1$ (by a single micro-step), from there by a sequence of micro-steps to a last proper $H$-vertex $z_2$ and from there (again by a single micro-step) to $y$.
    The path from $z_1$ to $z_2$ is divided into three parts: (1) first a sequence of micro-steps according to the strategy $\tau$ leading to a vertex $y_1$ of Player~I, (2) then the micro-step by which Player~I lets Player~II continue immediately in vertex $y_2$, and (3) from $y_2$ another sequence of micro-steps according to $\tau$ leading to $z_2$.
    (The option of Player~I to inspect the stack content of vertex $y_1$ is not relevant for the path from $x$ to $y$.)
    So the formula $\chi^{\CCC}_{\sigma}(x,y)$ has the form
    \[
        \chi^{\CCC}_{\sigma}(x,y) \coloneqq \varphi_2(x) \wedge \varphi_1(y) \wedge \varphi_G(x) \wedge \varphi_G(y) \wedge \chi_0(x,y),
    \]
    where $\chi_0(x,y)$ is the formula
    \[
        \exists z_1, z_2, y_1, y_2
        ( \varphi_{\tau}(x, z_1) \wedge \theta_{(1)}(z_1, y_1) \wedge \theta_{(2)}(y_1, y_2) \wedge \theta_{(3)}(y_2, z_2) \wedge \varphi_{\tau}(z_2, y) ).
    \]
    For the formula $\theta_{(1)}(z_1,y_1)$ we can take $\psi_{\tau}^*(z_1, y_1)$, similarly for $\theta_{(3)}(y_2,z_2)$.
    Finally, $\theta_{(2)}(y_1,y_2)$ just has to express the change of control state when Player~I chooses to let Player~II continue again.
\end{proof}

\subsection{Constructing a transducer}

In this subsection we return to the original game $\Gamma(L_\CA)$ defined by a given deterministic parity $\NN$-MSO-automaton $\CA$.
So far, we have shown how a memoryless winning strategy $\sigma$ for the winner of the parity game over the normal prefix-recognizable game graph $G(\CA)$ from the initial vertex $(q_0)_\pOne/$ can be constructed, represented by an MSO-formula $\chi_\sigma(x,y)$.
We assume that Player~II is the winner, the case of Player~I can he handled analogously.
To finish the proof of the main theorem,  it remains to show that the formula $\chi_\sigma(x,y)$ can be transformed into a transducer realizing a winning strategy for the original game $\Gamma(L_\CA)$.

This transducer must generate, number by number, an $\omega$-word  $n_0 n_1 n_2 \ldots \in \NN^\omega$ for an $\omega$-word $m_0 m_1 m_2 \ldots \in \NN^\omega$ such that the deterministic $\NN$-MSO parity-automaton $\CA$  accepts the combined $\omega$-word $m_0 n_0 m_1 n_1 m_2 n_2 \ldots$.
A step of the transducer involves reading an input $m_i$ and producing an output $n_i$, corresponding to two steps of $\CA$.
In order to simulate the run of $\CA$ on the combined sequence, the transducer will have transition formulas that describe the change of configuration by the previous output together with the current input; for the first input (where a previous output is lacking) only the change of the initial configuration by the first input is described.

The output that is produced in a given configuration will be defined to be compatible with the strategy $\sigma$:
For a configuration $(p,i)$ there is a corresponding vertex of Player~II in the game graph $G(\CA)$, namely $1^i p_\pTwo/$, and the strategy $\sigma$ prescribes a move to an adjacent vertex $1^j q_\pOne/$ representing a configuration $(q,j)$.
The output $n$ of the transducer in the configuration $(p,i)$ will be chosen in such a way that the transition of $\CA$ from $(p,i)$ via $n$ leads to the configuration $(q,j)$.

For the formal construction of the transducer, we need a technical preparation:
We have to translate the formula $\chi_\sigma(x,y)$ interpreted in $\CT_{\BBB^*}$ into formulas interpreted in $\CN_\mathrm{succ}$, as they appear in $\NN$-MSO-automata and $\NN$-MSO-transducers.

\begin{lemma}\label{translation}
    Let $\BBB = \{1\} \cup \{q_\pOne/ \mid q \in Q\} \cup \{q_\pTwo/ \mid q \in Q\}$, where $Q$ is a finite set.
    For a formula $\chi(x,y)$ interpreted in $\CT_{\BBB^*}$,
    one can construct formulas $\delta_{pq}(x,y)$, for $p,q \in Q$, interpreted in $\CN_\mathrm{succ}$, such that
    \[
        \CT_{\BBB^*} \models \chi[1^m p_\pTwo/, 1^n q_\pOne/] \ \
        \mbox{iff} \  \  \CN_\mathrm{succ} \models \delta_{pq}[m,n].
    \]
\end{lemma}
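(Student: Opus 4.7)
The plan is to exploit the fact that the nodes of interest, $1^m p_\pTwo/$ and $1^n q_\pOne/$, lie one step off the $1$-branch of $\CT_{\BBB^*}$, and that the unlabeled tree is highly symmetric. The target relation on $\NN$ should be $\omega$-regular, hence MSO-definable over $\CN_\mathrm{succ}$ by B\"uchi's theorem.

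First, by Rabin's correspondence, transform $\chi(x,y)$ into a nondeterministic parity tree automaton $\CA_\chi$ running on $|\BBB|$-ary trees whose nodes are labeled by an alphabet $\{\bullet, \star_x, \star_y\}$ indicating whether the node bears the mark $x$, the mark $y$, or neither. The automaton accepts such a labeled tree iff exactly one $\star_x$-label and one $\star_y$-label occur and $\chi$ holds for the corresponding pair. Exploit now the symmetry of unmarked subtrees: via emptiness arguments for parity tree automata (Theorem~\ref{rabinbasis}) one can compute, for each state $r$ of $\CA_\chi$, whether $\CA_\chi$ has an accepting run from $r$ on (i) the fully unmarked tree $\CT_{\BBB^*}$, (ii) a copy of $\CT_{\BBB^*}$ whose root carries $\star_x$ and which is otherwise unmarked, or (iii) the analogue with $\star_y$. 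Denote the resulting subsets of states by $A$, $A_x$, $A_y$; they constitute finite, pre-computable data.

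Next reduce acceptance of the specific trees of interest (with $\star_x$ at $1^m p_\pTwo/$ and $\star_y$ at $1^n q_\pOne/$) to a run along the $1$-branch. At each position $1^k$, $\CA_\chi$ picks a transition that assigns a successor state to each of its $|\BBB|$ children: the $1$-child continues the branch, while the $r_\pOne/$- and $r_\pTwo/$-children root off-branch subtrees. Each off-branch subtree is fully unmarked, except at $k=m$ (the $p_\pTwo/$-child roots a $\star_x$-marked copy of $\CT_{\BBB^*}$) and at $k=n$ (analogously at the $q_\pOne/$-child with $\star_y$). So the constraint on the state sent off-branch is membership in $A$ everywhere, except at $(m, p_\pTwo/)$ (membership in $A_x$) and at $(n, q_\pOne/)$ (membership in $A_y$). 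This on-branch computation is realized by a nondeterministic parity $\omega$-word automaton $\CB$ over a finite alphabet, reading an input that marks at position $k$ whether $k = m$, $k = n$, both, or neither. By B\"uchi's theorem, the $\omega$-regular relation $\{(m,n) : \CB \text{ accepts}\}$ is defined by an MSO-formula $\delta_{pq}(x,y)$ over $\CN_\mathrm{succ}$, which is the required formula.

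The main technical obstacle is to thread the parity acceptance condition consistently through the reduction: every state sent into an off-branch subtree spawns its own infinite subrun whose parity success must be faithfully captured by $A$, $A_x$, $A_y$, while the states along the $1$-branch itself contribute an additional infinite parity trace that must be verified by $\CB$. Combining these two sources of infinite behaviour (and handling degenerate cases such as $m = n$ or small values of $m, n$) is conceptually straightforward but requires careful bookkeeping in the construction of $\CB$.
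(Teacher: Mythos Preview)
Your argument is correct: converting $\chi$ to a parity tree automaton, collapsing all off-spine subtrees via precomputed acceptance sets $A, A_x, A_y$, and then reading off an $\omega$-word automaton along the $1$-branch is a sound and standard technique, and B\"uchi's theorem finishes the job. The bookkeeping you flag (parity on the spine versus parity in the off-branch subruns, and the case $m=n$) is routine.

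The paper, however, takes a much shorter route that avoids building any automata explicitly. It first shifts the free variables back onto the $1$-branch by setting $\chi_{pq}(x,y) \coloneqq \exists x'\,\exists y'\bigl(\mathrm{succ}_{p_\pTwo/}(x,x') \wedge \mathrm{succ}_{q_\pOne/}(y,y') \wedge \chi(x',y')\bigr)$, so that the relation of interest becomes $\CT_{\BBB^*} \models \chi_{pq}[1^m,1^n]$. Since the arguments now live in $\{1\}^*$, the paper simply invokes Proposition~\ref{alphabetrestriction} (the alphabet-restriction lemma, itself a corollary of the $U{,}V{,}W$ characterization of MSO-definable binary relations in trees) to pass from $\CT_{\BBB^*}$ to $\CT_{\{1\}^*}$, and then identifies $\CT_{\{1\}^*}$ with $\CN_\mathrm{succ}$ by renaming $\mathrm{succ}_1$ to $\mathrm{succ}$. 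In effect, your spine-decomposition argument re-derives from scratch the special case of Proposition~\ref{alphabetrestriction} that is needed here; the paper's proof is cleaner because that machinery has already been set up earlier, while yours is more self-contained and makes the automata-theoretic content explicit.
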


\begin{proof}
    For $p,q \in Q$, we define $\chi_{pq}(x,y) \coloneqq \exists x',y' \bigl(\mathrm{succ}_{p_\pTwo/}(x,x') \wedge \mathrm{succ}_{q_\pOne/}(y,y') \wedge \chi(x',y') \bigr)$.
    Then we have
    \[
        \CT_{\BBB^*} \models \chi[1^m p_\pTwo/, 1^n q_\pOne/] \ \ \mbox {iff} \ \ \CT_{\BBB^*} \models \chi_{pq}[1^m, 1^n].
    \]
    The MSO-formula $\chi_{pq}$ over $\CT_{\BBB^*}$ can be converted into the desired MSO-formula $\delta_{pq}$ over $\CN_\mathrm{succ}$ in the following way:
    The formula $\chi_{pq}$ defines a relation of words over the alphabet $\{1\} \subseteq \BBB$, so Proposition~\ref{alphabetrestriction} can be applied to obtain an equivalent formula over $\CT_{\{1\}^*}$.
    Since $\CT_{\{1\}^*}$ is isomorphic to $\CN_\mathrm{succ}$, renaming $\mathrm{succ}_1$ to $\mathrm{succ}$ yields the desired formula $\delta_{pq}$.
\end{proof}

Let $\CA = (Q, q_0, (\varphi_{pq}(x, z, y))_{p,q \in Q}, c)$ be the given deterministic parity $\NN$-MSO-automaton.
We now define the desired transducer
\[
    \CC = (Q \cup \{q'_0\}, q'_0, (\nu_{pq}(x,z,y))_{p,  q \in Q}, (\psi_{q}(x,z))_{q \in Q})
\]
for Player~II.
Recall that the entry $z$ in $\nu_{pq}$ will stand for a number contributed by Player~I, while the $z$ in the output formula $\psi_q$ represents the choice of Player~II.
In the following we shall use suggestive variable names for control states and numbers to support readability.

 \medskip
We isolate the initial step (by Player~I) with the formulas
\[
    \nu_{q'_0 p}(x,z,y) \coloneqq  (x = 0) \wedge \varphi_{q_0 p}(x,z,y).
\]
The subsequent output of Player~II is specified (as are the later outputs) using the formulas $\delta^\sigma_{pq}$ that are obtained from $\chi_\sigma$ according to Lemma~\ref{translation}.
More specifically, the output in a given configuration is defined to be the smallest number that is compatible with the change of configuration of the automaton $\CA$ that is determined by the formulas $\delta^\sigma_{pq}$:
\[
    \psi_p(y,z') \coloneqq \bigvee_{q \in Q}   [\exists y' (\delta^\sigma_{pq}(y,y') \wedge \varphi_{pq}(y,z',y') \wedge \neg \exists z^-(z^- < z' \wedge \varphi_{pq}(y, z^-, y')) ].
\]

As explained at the beginning of this subsection, the transition formulas describe the change of configuration induced by the previous output (executed by Player~II) and the current input (contributed by Player~I):
\[
    \nu_{pr}(y,z,y'') \coloneqq  \bigvee_{q \in Q} \exists y' \ [\delta^\sigma_{pq}(y, y') \wedge \varphi_{qr}(y', z,y'')].
\]
With these formulas $\nu_{pr}$ and $\psi_p$,
the transducer $\CC$ realizes a winning strategy for Player~II in $\Gamma(L_\CA)$, assuming that Player~II wins $\Gamma(L_\CA)$.
This concludes the proof of Theorem \ref{maintheorem}.

\begin{remark}\label{numberchoice}
    The formula $\psi_p(y,z')$ in the proof above serves to select the number provided by Player~II to be the smallest one that is compatible with the move to the new configuration as determined by the winning strategy.
    In more general types of games, a corresponding MSO-definable selection of the ``letters'' to be provided by Player~II may be lacking.
    In Section~\ref{variants} (items \ref{word-memory} and \ref{tree-letters}) we mentioned such cases where the present set-up of proof is not applicable.
\end{remark}

\section{Conclusion}

We have proved an analogue of the B\"uchi-Landweber Theorem~\cite{BL69} for games over the alphabet~$\NN$, where winning conditions are subsets of the Baire space.
For the specification of a game we used a natural model of automaton (``$\NN$-MSO-automaton'') in which transitions are defined by MSO-formulas, and for the realization of winning strategies we defined a corresponding model of transducer.
Our result is completely qualitative; we did not enter considerations on complexity.

 \medskip
We see many open problems that are motivated by the present paper.
We mention just four of them.
\begin{itemize}
\item
In our proof we switched from MSO-logic to automata and back at several places, which causes complexity bounds beyond any practical relevance.
Thus a natural question is how to avoid the use of MSO-logic (even in the specification of games and in transducers) to obtain reasonable complexity bounds.
In this context, we would like to mention work of Carayol and Hague~\cite{CarayolH14} which offers a saturation-based approach for the construction strategies in pushdown games with reachability objectives.
It seems interesting to develop this approach and to find a more direct proof of our main result.

\item
The weak closure properties of $\NN$-MSO-automata prohibit an elegant connection to logic,
in contrast to the situation for finite-state $\omega$-automata where we have expressive equivalence with MSO-logic.
One may ask whether there are logical systems that can serve as a framework for the definition of games in the Baire space and still allow an effective solution of Church's synthesis problem.

\item
The deterministic parity $\NN$-MSO-automata define $\omega$-languages in the Borel class BC($\Pi_2$), the Boolean closure of $\Pi_2$.
It would be interesting to present effective criteria for these automata by which one can decide whether the recognized $\omega$-language is located lower in the Borel hierarchy, e.g., whether it is an open set or a $\Pi_2$-set.
For finite-state $\omega$-automata such criteria are known~(\cite{landweber_decision_1969}).

\item
In their nondeterministic version, $\NN$-MSO-automata can recognize $\omega$-languages of higher Borel level than BC($\Pi_2$), the Boolean closure of $\Pi_2$, for example the $\omega$-language that contains the sequences in which some number occurs infinitely often (which gives a proper $\Sigma_3$-set; cf.~\cite{cachat_solving_2002})).
It would be interesting to extend deterministic $\NN$-MSO-automata by acceptance conditions that allow to recognize $\omega$-languages beyond BC$(\Pi_2$) (as in~\cite{Se06} over finite alphabets)
such that for the corresponding games an analogue of the main result of this paper still holds.
\end{itemize}

Finally, we mention that in our main result (Theorem~\ref{maintheorem}) it is essential that only deterministic parity $\NN$-MSO-automata are admitted for the specification of infinite games.
From work of Finkel it follows that nondeterministic parity $\NN$-MSO-automata can define games that are not Borel~\cite{Fi03}, and that for such games already the question of determinacy depends on the assumption of set-theoretic hypotheses~\cite{Fi13}.
Thus severe restrictions on nondeterministic $\NN$-MSO-automata would have to be imposed when an analogue of Theorem~\ref{maintheorem} is envisaged.

\bibliographystyle{fundam}
\bibliography{bibliography}

\begin{thebibliography}{10}
\providecommand{\url}[1]{\texttt{#1}}
\providecommand{\urlprefix}{URL }
\expandafter\ifx\csname urlstyle\endcsname\relax
  \providecommand{\doi}[1]{doi:\discretionary{}{}{}#1}\else
  \providecommand{\doi}{doi:\discretionary{}{}{}\begingroup
  \urlstyle{rm}\Url}\fi
\providecommand{\eprint}[2][]{\url{#2}}

\bibitem{BL69}
Büchi JR, Landweber LH.
\newblock Solving Sequential Conditions by Finite-State Strategies.
\newblock \emph{Transactions of the American Mathematical Society}, 1969.
\newblock \textbf{138}:295--311.
\newblock \doi{10.2307/1994916}.

\bibitem{TB73}
Trakhtenbrot BA, Barzdin YM.
\newblock Finite Automata -- Behavior and Synthesis, volume~1 of
  \emph{Fundamental Studies in Computer Science}.
\newblock North-Holland Pub. Co., Amsterdam, 1973.
\newblock ISBN 0444104186.

\bibitem{GrThWi}
Gr{\"{a}}del E, Thomas W, Wilke T (eds.).
\newblock Automata, Logics, and Infinite Games: {A} Guide to Current Research,
  volume 2500 of \emph{Lecture Notes in Computer Science}. Springer, 2002.
\newblock \doi{10.1007/3-540-36387-4}.

\bibitem{Tho97}
Thomas W.
\newblock Languages, Automata, and Logic.
\newblock In: Rozenberg G, Salomaa A (eds.), Handbook of Formal Languages, pp.
  389--455. Springer, 1997.
\newblock \doi{10.1007/978-3-642-59126-6_7}.

\bibitem{Bru18}
Br{\"{u}}tsch B.
\newblock Strategies in Infinite Games: Structured Reactive Programs and
  Transducers over Infinite Alphabets.
\newblock Ph.D. thesis, {RWTH} Aachen University, Germany, 2018.
\newblock \doi{10.18154/RWTH-2020-03492}.

\bibitem{BT16}
Br{\"{u}}tsch B, Thomas W.
\newblock Playing Games in the {Baire} Space.
\newblock In: Proceedings of the Cassting Workshop on Games for the Synthesis
  of Complex Systems (CASSTING 2016), volume 220 of \emph{{EPTCS}}. 2016 pp.
  13--25.
\newblock \doi{10.4204/EPTCS.220.2}.

\bibitem{speltenwinterthomas}
Spelten A, Thomas W, Winter S.
\newblock Trees over Infinite Structures and Path Logics with Synchronization.
\newblock In: Proceedings of the 13th International Workshop on Verification of
  Infinite-State Systems (INFINITY 2011), volume~73 of \emph{{EPTCS}}. 2011 pp.
  20--34.
\newblock \doi{10.4204/EPTCS.73.5}.

\bibitem{CST15}
Czyba C, Spinrath C, Thomas W.
\newblock Finite Automata Over Infinite Alphabets: Two Models with Transitions
  for Local Change.
\newblock In: Proceedings of the 19th International Conference on Developments
  in Language Theory (DLT 2015), volume 9168 of \emph{Lecture Notes in Computer
  Science}. Springer, 2015 pp. 203--214.
\newblock \doi{10.1007/978-3-319-21500-6_16}.

\bibitem{BLT17}
Brütsch B, Landwehr P, Thomas W.
\newblock {$\mathbb{N}$}-Memory Automata over the Alphabet {$\mathbb{N}$}.
\newblock In: Proceedings of the 11th International Conference on Language and
  Automata Theory and Applications (LATA 2017), volume 10168 of \emph{Lecture
  Notes in Computer Science}. Springer, 2017 pp. 91--102.
\newblock \doi{10.1007/978-3-319-53733-7_6}.

\bibitem{KF}
Kaminski M, Francez N.
\newblock Finite-Memory Automata.
\newblock \emph{Theoretical Computer Science}, 1994.
\newblock \textbf{134}(2):329--363.
\newblock \doi{10.1016/0304-3975(94)90242-9}.

\bibitem{BDMS}
Boja{\'{n}}czyk M, David C, Muscholl A, Schwe\-ntick T, Segoufin L.
\newblock Two-Variable Logic on Data Words.
\newblock \emph{{ACM} Transactions on Computational Logic}, 2011.
\newblock \textbf{12}(4):27.
\newblock \doi{10.1145/1970398.1970403}.

\bibitem{Tan12}
Tan T.
\newblock An Automata Model for Trees with Ordered Data Values.
\newblock In: Proceedings of the 27th Annual IEEE Symposium on Logic in
  Computer Science (LICS 2012). {IEEE} Computer Society, 2012 pp. 586--595.
\newblock \doi{10.1109/LICS.2012.69}.

\bibitem{BKS14}
Boja{\'{n}}czyk M, Klin B, Lasota S.
\newblock Automata Theory in Nominal Sets.
\newblock \emph{Logical Methods in Computer Science}, 2014.
\newblock \textbf{10}(3).
\newblock \doi{10.2168/LMCS-10(3:4)2014}.

\bibitem{BS20}
Boja{\'{n}}czyk M, Stefa{\'{n}}ski R.
\newblock Single-Use Automata and Transducers for Infinite Alphabets.
\newblock In: Czumaj A, Dawar A, Merelli E (eds.), 47th International
  Colloquium on Automata, Languages, and Programming ({ICALP} 2020), volume 168
  of \emph{LIPIcs}. Schloss Dagstuhl -- Leibniz-Zentrum für Informatik, 2020
  pp. 113:1--113:14.
\newblock \doi{10.4230/LIPIcs.ICALP.2020.113}.

\bibitem{carayol_choice_2010}
Carayol A, Löding C, Niwinski D, Walukiewicz I.
\newblock Choice Functions and Well-Orderings over the Infinite Binary Tree.
\newblock \emph{Open Mathematics}, 2010.
\newblock \textbf{8}(4).
\newblock \doi{10.2478/s11533-010-0046-z}.

\bibitem{Rab69}
Rabin MO.
\newblock Decidability of Second-Order Theories and Automata on Infinite Trees.
\newblock \emph{Transactions of the American Mathematical Society}, 1969.
\newblock \textbf{141}:1--35.
\newblock \doi{10.1090/S0002-9947-1969-0246760-1}.

\bibitem{GaleStewart53}
Gale D, Stewart F.
\newblock Infinite Games with Perfect Information.
\newblock In: Contributions to the Theory of Games, Annals of Mathematics
  Studies, pp. 245--266. Princeton University Press, 1953.
\newblock \doi{10.1515/9781400881970-014}.

\bibitem{Mar75}
Martin DA.
\newblock Borel Determinacy.
\newblock \emph{Annals of Mathematics}, 1975.
\newblock \textbf{102}(2):363--371.
\newblock \doi{10.2307/1971035}.

\bibitem{Chu63}
Church A.
\newblock Logic, Arithmetic, and Automata.
\newblock In: Proceedings of the International Congress of Mathematicians 1962,
  pp. 23--35. Institut Mittag-Leffler, Djursholm, 1963.

\bibitem{McN65}
McNaughton R.
\newblock Finite-State Infinite Games.
\newblock Technical Report MAC-M-125, MIT, 1965.

\bibitem{KPV10}
Kupferman O, Piterman N, Vardi MY.
\newblock An Au\-tomata-Theoretic Approach to Infinite-State Systems.
\newblock In: Manna Z, Peled DA (eds.), Time for Verification, volume 6200 of
  \emph{Lecture Notes in Computer Science}, pp. 202--259. Springer, 2010.
\newblock \doi{10.1007/978-3-642-13754-9_11}.

\bibitem{piterman_micro-macro_2003}
Piterman N, Vardi MY.
\newblock Micro-Macro Stack Systems: A New Frontier of Elementary Decidability
  for Sequential Systems.
\newblock In: Proceedings of the 18th Annual IEEE Symposium on Logic in
  Computer Science (LICS 2003). 2003 pp. 381--390.
\newblock \doi{10.1109/LICS.2003.1210078}.

\bibitem{cachat_uniform_2003}
Cachat T.
\newblock Uniform Solution of Parity Games on Prefix-Recognizable Graphs.
\newblock \emph{Electronic Notes in Theoretical Computer Science}, 2003.
\newblock \textbf{68}(6):71--84.
\newblock \doi{10.1016/S1571-0661(04)80534-6}.

\bibitem{Caucal03}
Caucal D.
\newblock On the Transition Graphs of Turing Machines.
\newblock \emph{Theoretical Computer Science}, 2003.
\newblock \textbf{296}(2):195--223.
\newblock \doi{10.1016/S0304-3975(02)00655-2}.

\bibitem{BCL07}
Blumensath A, Colcombet T, L{\"{o}}ding C.
\newblock Logical Theories and Compatible Operations.
\newblock In: Flum J, Gr{\"{a}}del E, Wilke T (eds.), Logic and Automata:
  History and Perspectives, volume~2 of \emph{Texts in Logic and Games}.
  Amsterdam University Press, 2008 pp. 73--106.

\bibitem{EmersonJutla91}
Emerson EA, Jutla CS.
\newblock Tree Automata, Mu-Calculus and Determinacy.
\newblock In: Proceedings of the 32nd Annual Symposium on Foundations of
  Computer Science (FOCS 1991). {IEEE} Computer Society, 1991 pp. 368--377.
\newblock \doi{10.1109/SFCS.1991.185392}.

\bibitem{Wal02}
Walukiewicz I.
\newblock Monadic Second-Order Logic on Tree-Like Structures.
\newblock \emph{Theoretical Computer Science}, 2002.
\newblock \textbf{275}(1-2):311--346.
\newblock \doi{10.1016/S0304-3975(01)00185-2}.

\bibitem{CarayolH14}
Carayol A, Hague M.
\newblock Regular Strategies in Pushdown Reachability Games.
\newblock In: Proceedings of the 8th International Workshop on Reachability
  Problems (RP 2014), volume 8762 of \emph{Lecture Notes in Computer Science}.
  Springer, 2014 pp. 58--71.
\newblock \doi{10.1007/978-3-319-11439-2_5}.

\bibitem{landweber_decision_1969}
Landweber LH.
\newblock Decision Problems for {$\omega$}-Automata.
\newblock \emph{Mathematical Systems Theory}, 1969.
\newblock \textbf{3}(4):376--384.
\newblock \doi{10.1007/BF01691063}.

\bibitem{cachat_solving_2002}
Cachat T, Duparc J, Thomas W.
\newblock Solving Pushdown Games with a {$\Sigma_3$} Winning Condition.
\newblock In: Proceedings of the 16th International Workshop/11th Annual
  Conference of the EACSL on Computer Science Logic (CSL 2002), volume 2471 of
  \emph{Lecture Notes in Computer Science}. Springer, 2002 pp. 322--336.
\newblock \doi{10.1007/3-540-45793-3_22}.

\bibitem{Se06}
Serre O.
\newblock Games with Winning Conditions of High {Borel} Complexity.
\newblock \emph{Theoretical Computer Science}, 2006.
\newblock \textbf{350}(2-3):345--372.
\newblock \doi{10.1016/j.tcs.2005.10.024}.

\bibitem{Fi03}
Finkel O.
\newblock Borel Hierarchy and Omega Context Free Languages.
\newblock \emph{Theoretical Computer Science}, 2003.
\newblock \textbf{290}(3):1385--1405.
\newblock \doi{10.1016/S0304-3975(02)00042-7}.

\bibitem{Fi13}
Finkel O.
\newblock The Determinacy of Context-Free Games.
\newblock \emph{The Journal of Symbolic Logic}, 2013.
\newblock \textbf{78}(4):1115--1134.
\newblock \doi{10.2178/jsl.7804050}.

\end{thebibliography}

\end{document}